\newcounter{section-preserve}
\newcounter{theorem-preserve}
\newcommand{\blank}[1]{}
\newtoks\magicAppendix
\newtoks\magictoks
\newif\iflater
\long\def\later#1{\magictoks={#1}%
	\edef\magictodo{\noexpand\magicAppendix={\the\magicAppendix 
			\the\magictoks%
	}}
	\magictodo}
\long\def\both#1{\magictoks={#1}%
	\edef\magictodo{\noexpand\magicAppendix={\the\magicAppendix 
			\noexpand\setcounter{theorem-preserve}{\noexpand\arabic{theorem}}%
			\noexpand\setcounter{theorem}{\arabic{theorem}}%
			\noexpand\setcounter{section-preserve}{\noexpand\arabic{section}}%
			\noexpand\setcounter{section}{\arabic{section}}%
			\noexpand\let\noexpand\oldsection=\noexpand\thesection
			\noexpand\def\noexpand\thesection{\thesection}
			\noexpand\let\noexpand\oldlabel=\noexpand\label
			\noexpand\let\noexpand\label=\noexpand\blank
			\the\magictoks%
			\noexpand\setcounter{theorem}{\noexpand\arabic{theorem-preserve}}%
			\noexpand\setcounter{section}{\noexpand\arabic{section-preserve}}%
			\noexpand\let\noexpand\thesection=\noexpand\oldsection
			\noexpand\let\noexpand\label=\noexpand\oldlabel
	}}
	\magictodo
	\the\magictoks}
\def\magicappendix{\latertrue \the\magicAppendix}
\newif\ifabstract
\newif\iffull
 \long\def\both#1{#1}
 \let\later=\both
 \def\magicappendix{}
\renewcommand{\em}{\bf}
\renewcommand{\emph}[1]{\textbf{#1}}
\newcommand\module{\mathbf}
\newcommand{\LocateAndFree}{\operatorname{\text{\textsc{LocateAndFree}}}}
\newcommand{\Tunnel}{\operatorname{\textsc{Tunnel}}}
\newcommand{\Fix}{\operatorname{\textsc{Fix}}}
\crefname{lemma}{lemma}{lemmas}
\newtheorem{theorem}{Theorem}
\newtheorem{lemma}[theorem]{Lemma}
\def\QED{\ensuremath{{\square}}}
\def\markatright#1{\leavevmode\unskip\nobreak\quad\hspace*{\fill}{#1}}
\newenvironment{proof}
  {\begin{trivlist}\item[\hskip\labelsep{\bf Proof.}]}
  {\markatright{\QED}\end{trivlist}}
\title{Input-Sensitive Reconfiguration of Sliding Cubes}
\author{Hugo A. Akitaya\thanks{University of Massachusetts Lowell, Lowell, USA, \texttt{\{hugo\_akitaya@, frederick\_stock@student.\}uml.edu}. Supported by NSF grant CCF-2348067.}
	\and
	Matias~Korman
  \thanks{{Siemens Electronic Design Automation, Wilsonville, USA}, 
          \texttt{matias.korman@siemens.com}}
  \and 
  Frederick~Stock%
  \footnotemark[1]
  }
\begin{document}
\thispagestyle{empty}
\maketitle

\begin{abstract}
A configuration of $n$ unit-cube-shaped \textit{modules} (or \textit{robots}) is a lattice-aligned placement of the $n$ modules so that their union is face-connected. 
The reconfiguration problem aims at finding a sequence of moves that reconfigures the modules from one given configuration to another.
The sliding cube model (in which modules are allowed to slide over the face or edge of neighboring modules) is one of the most studied theoretical models for modular robots.

In the sliding cubes model we can reconfigure between any two shapes in $O(n^2)$ moves ([Abel \textit{et al.} SoCG 2024]). If we are interested in a reconfiguration algorithm into a \textit{compact configuration}, the number of moves can be reduced to the sum of coordinates of the input configuration (a number that ranges from $\Omega(n^{4/3})$ to $O(n^2)$, [Kostitsyna \textit{et al.} SWAT 2024]). We introduce a new algorithm that combines both universal reconfiguration and an input-sensitive bound on the sum of coordinates of both configurations, with additional advantages, such as $O(1)$ amortized computation per move.

\end{abstract}
\section{Introduction}
\emph{Modular self-reconfigurable robotic systems} offer several advantages over typical robotic systems.
Such systems are composed of a set of robotic units (called \emph{modules}) that can communicate, attach, detach, and move relative to each other.
This means modules can move to change the overall system's shape, providing versatility to perform unforeseen tasks. 
While advantageous, the increased degree of freedom also incurs an algorithmic challenge, as it is not easy to determine how to use the module's operations so that the union of all modules creates a specific shape (this is known as the {\em reconfiguration} problem). 
The algorithmic community has investigated the reconfiguration problem for many variations of modular robots, be it the type of moves allowed (sliding~\cite{a.akitaya_et_al:LIPIcs.SWAT.2022.4,DP,dumitrescu2004motion}, pivoting~\cite{akitaya2021universal,sung2015reconfiguration}, and so-called square ``atoms''~\cite{aloupis2009linear,aloupis2008reconfiguration}) or shape  (such as hexagons~\cite{a.akitaya_et_al:LIPIcs.SoCG.2021.10}).

We investigate one of the most established models, which is called the \emph{sliding (hyper-)cube model}: modules are lattice-aligned unit (hyper-)cubes where two modules are \emph{adjacent} if they share a face (i.e., a $(d-1)$-dimensional facet).
In this model, two types of moves are allowed: a module can slide along the co-(hyper-)planar faces of two other adjacent modules, or it can slide along two orthogonal faces of another module (see Fig.~\ref{fig:Cubemoves}).

\begin{figure}[t]
    \centering
    \begin{subfigure}[t]{.45\linewidth}
        \centering
        \includegraphics[width=\linewidth]{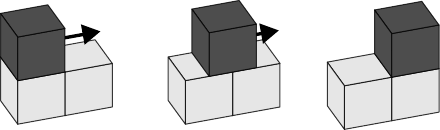}
        \caption{}
    \end{subfigure}\hfill
    \begin{subfigure}[t]{.45\linewidth}
        \centering
        \includegraphics[width=\linewidth]{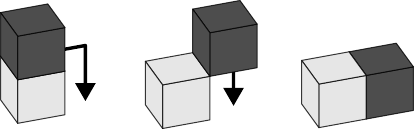}
         \caption{}
    \end{subfigure}
    \caption{
    (a) Slide and (b) Convex transition.\vspace{-1em}}
    \label{fig:Cubemoves}
\end{figure}

The \emph{reachability} problem asks whether, given two configurations with $n$ modules, there exists a sequence of moves that can transform one into the other. If the answer is always `yes' we say that the model admits \emph{universal} reconfiguration.

The first universal result for the two dimensional sliding model was obtained by Dumitrescu and Pach~\cite{DP} with a reconfiguration algorithm that required $O(n^2)$ moves. This result could be seen as optimal since some instances require $\Omega(n^2)$ moves.
The 2-dimensional result was afterwards extended to three and higher dimensions in an informal publication of  Abel and Kominers~\cite{abel2008universal}. Their paper claimed universal reconfiguration and that $O(n^3)$ many moves were always sufficient.

\iffull
Other research~\cite{akitaya2021universal} focused on variations of the model, in both the shape of the modules and the type of movements that were allowed. They obtained characterization of which variations of the model have universal reconfiguration. For the versions in which no universal reconfiguration algorithm exists, they showed that determining if a particular instance can be reconfigured into a target configuration is PSPACE-complete.
\fi

Recently there have been two simultaneous, independent results in three dimensions. Abel \textit{et al.}~\cite{aakks-quadratic04} formally published their preliminary result, reducing the number of moves to $O(n^2)$. In the same paper, the authors explain how the algorithm can be made in-place and input-sensitive (the exact bound depends on the dimension of the bounding box of both configurations). \emph{In-place} means throughout the reconfiguration there are never more than a constant number of modules outside the bounding boxes of the start and end configuration. Further, these modules are never more than distance 1 from the bounding box \footnote{Traditionally, in-place algorithms are a bit more restrictive, only one module is allowed outside the bounding box, making both constants exactly 1. In this paper we relax the condition for ease of description.}.
Parallel to ~\cite{aakks-quadratic04}, Kostitsyna \textit{et al.}~\cite{inPlaceCompact24} independently produced a different input-sensitive reconfiguration algorithm: a configuration $C$ can be reconfigured into a {\em compact configuration}, intuitively, a shape where all modules are clumped together without holes (formal definition given in Section~\ref{sec:def}). Their algorithm takes $O(\sum_{\module m\in C_1} \|\module m\|_1)$ sliding moves where $\|\module m\|_1$ denotes the $L_1$ norm of the position of a module $\module m$.
This result has a better input-sensitive bound, but is slightly limited: they only show how to reconfigure into a compact configuration. Since there is more than one compact configuration with the same number of modules, this does not directly lead to a universal reconfiguration strategy. Reconfiguration between compact configurations is significantly simpler than compactification, but does not immediately follow.

This paper we improve on both ~\cite{aakks-quadratic04} and~\cite{inPlaceCompact24}: we give a universal reconfiguration whose number total number of moves depends on the sum of input and target coordinates ($O(\sum_{\module m\in C_1} \|\module m\|_1+ \sum_{\module m\in C_2} \|\module m\|_1)$. Since configurations are connected, it is easy to see that this bound is in $\Omega(n^{4/3}) \cap O(n^2)$ (when all modules form a solid cube or when they form the 1-skeleton of a larger cube, respectively). Our main result is formally stated as follows.

\begin{theorem}
    \label{thm:main}
    Given two configurations $C_1$ and $C_2$ with $n$ cube modules in 3 dimensions where all modules lie in the positive $xyz$ orthant, there is an in-place reconfiguration between them that uses at most $18(\sum_{\module m\in C_1} \|\module m\|_1+\sum_{\module m\in C_2} \|\module m\|_1) + 120n + O(1)$ sliding moves where $\|\module m\|_1$ denotes the $L_1$ norm of the position of a module $\module m$. Such a reconfiguration can be computed in $O(1)$ amortized time per sliding move.
\end{theorem}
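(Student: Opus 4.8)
The plan is to derive universal reconfiguration from a single \emph{compaction primitive} together with the reversibility of sliding and convex-transition moves. Fix a canonical target configuration $\mathrm{Can}(n)$ that depends only on $n$ and is anchored at the origin corner (say, the greedy ``as-cubical-as-possible'' fill of the lattice near the origin), and suppose we can reconfigure any input $C$ into $\mathrm{Can}(n)$ with move count $f(C)$. Then $C_1$ is reconfigured into $C_2$ by first running this primitive on $C_1$, reaching $\mathrm{Can}(n)$, and then executing in reverse the move sequence the primitive would produce on $C_2$; since every move is reversible, this is a legal reconfiguration using $f(C_1)+f(C_2)$ moves, and in-placeness and amortized $O(1)$ time per move are inherited from the two halves. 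So it suffices to build the primitive with $f(C)\le 9\sum_{\module m\in C}\|\module m\|_1+60n+O(1)$, in-place, connectivity-preserving, and amortized $O(1)$ per move; summing the two halves gives Theorem~\ref{thm:main} with its constants $18$ and $120$.

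For the primitive I would follow the compaction strategy of Kostitsyna \emph{et al.}~\cite{inPlaceCompact24}: repeatedly identify a module of $C$ that can be removed without disconnecting the rest, and route it along a path that is coordinate-wise monotone toward the origin into the next free slot of the partially assembled copy of $\mathrm{Can}(n)$ at the origin corner. The removal order is the reverse of a suitable search order of a spanning tree of $C$, always peeling from the side farthest from the origin, which keeps the un-moved part face-connected throughout; a module starting at position $\module m$ then travels $O(\|\module m\|_1)$ steps to reach the origin region, plus $O(n^{1/3})$ steps to maneuver to its slot inside $\mathrm{Can}(n)$, plus $O(1)$ further fixed-overhead steps. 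Summing over all $n$ modules, the first terms total $O\!\left(\sum_{\module m}\|\module m\|_1\right)$, the second total $O(n^{4/3})$ which is absorbed into the first because any $n$-module configuration has $\sum_{\module m}\|\module m\|_1=\Omega(n^{4/3})$ as the excerpt observes, and the third total $O(n)$ — giving exactly the claimed form once the constants are traced through. Coordinate-wise monotonicity keeps each routed module inside $[0,x_{\module m}]\times[0,y_{\module m}]\times[0,z_{\module m}]\subseteq\mathrm{BB}(C)$, with only a constant number of modules ever leaving the box, and by at most distance $1$, during a convex transition; this is the in-place property. For amortized $O(1)$ time, the spanning tree, removal order, and target slots are precomputed in $O(n)$ time, and the occupied cells together with the current ``frontier'' of removable modules are maintained in hash tables supporting $O(1)$ updates per move; since $\Omega(n)$ moves are made, the preprocessing amortizes away.

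The step I expect to be the main obstacle is making connectivity preservation, in-placeness, and constant-time bookkeeping hold \emph{simultaneously}: a path that is monotone toward the origin can be blocked by modules not yet peeled, so the peeling schedule and the routing schedule must be interleaved so that, at the moment each module is routed, a monotone corridor to its slot is already clear — intuitively, clear the cells in order of decreasing $L_1$-distance from the origin, and within one shell pick an order that never seals a module off behind it, all without exceeding $\mathrm{BB}(C)$ or disconnecting the structure. A secondary subtlety is that $\mathrm{BB}(C_1)$ and $\mathrm{BB}(C_2)$ need not jointly contain any cubical region, so $\mathrm{Can}(n)$ must really be taken as the canonical greedy fill \emph{of the pertinent bounding box} (as balanced as the box allows), and one additionally has to check that two such canonical fillings can be converted into each other cheaply when one box contains the other. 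Settling this bridging step, and extracting the exact constants $9$ and $60$ from the per-module path-length, convex-transition, and slotting bounds, is where the bulk of the technical — but essentially routine — work lies.
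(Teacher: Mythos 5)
Your outer skeleton (compactify each configuration, then compose one sequence with the reverse of the other) matches the paper's, and the reversibility argument is fine. But there are two genuine gaps, and they are precisely the places where the paper does its real work.

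First, a single canonical configuration $\mathrm{Can}(n)$ depending only on $n$ is incompatible with in-placeness: if $C_1$ is an $n\times 1\times 1$ line and $C_2$ is a $1\times n\times 1$ line, no near-cubical fill lies inside either bounding box. You notice this and retreat to ``the canonical greedy fill of the pertinent bounding box,'' but then you have two \emph{different} targets and must reconfigure between them; your proviso ``when one box contains the other'' does not cover the general case (the two boxes above are not nested). What is actually needed is a reconfiguration between two arbitrary \emph{compact} configurations, and this is a separate lemma in the paper (proved by induction on $|C_1\setminus C_2|$ via a domination argument: there is always a non-dominated module of $C_1\setminus C_2$ whose removal keeps $C_1$ compact, and a cell of $C_2\setminus C_1$ whose addition keeps it compact, with a monotone surface path of length at most $\|\module m_1\|_1+\|\module m_2\|_1$ between them). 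The paper's introduction explicitly flags that compactification alone does not yield universality for exactly this reason; your proposal reproduces that gap.

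Second, the compaction primitive itself is asserted rather than proved, and the step you defer as ``essentially routine'' is the one that fails as stated. Peeling modules in order of decreasing $L_1$-distance from the origin does not preserve connectivity (the set of cells with $\|\cdot\|_1\le k$ restricted to $C$ need not be connected -- consider two towers joined only at the top), and routing a peeled module along the boundary of a non-compact configuration toward the origin can cost far more than $O(\|\module m\|_1)$ moves, since the boundary is not monotone and the module may have to detour around pockets; charging each module only its own $L_1$ norm is therefore unjustified. The paper avoids individual long-range routing entirely: it lowers whole clusters of the slice graph by one $z$-unit at a time using six ``musketeer'' helper modules for temporary connectivity, charges a constant number of moves ($17$ or $13$) to each unit decrease of a coordinate, and handles blocked ``straggler'' modules with a separate local procedure; the $60n$ term pays for musketeer travel, bounded by a careful face-charging argument. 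Without an argument of this kind (or a correct import of the Kostitsyna et al.\ analysis together with the connectivity-preserving peeling order it requires), the claimed bound $f(C)\le 9\sum_{\module m\in C}\|\module m\|_1+60n+O(1)$, and hence the theorem, is not established.
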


In our algorithm we use a constant number of auxiliary ``helper'' modules that will allow nearby modules to move. We follow existing literature ~\cite{akitaya2021universal} and refer to the auxiliary modules as {\em musketeers}. The multiplicative factor 
$120$ may seem large, we note that  our algorithm uses six musketeer modules, so this constant is better characterized by $20n$ moves per musketeer module. Further, since we reconfigure between $C_1$ and $C_2$ by reaching an intermediate configuration, this means that we actually need $10n$ moves per musketeer to transform any configuration of $n$ modules into \textit{compact} form.

Although the main emphasis is in the total number of moves required, it is also interesting to bound the time needed to compute the sequence of moves needed to reconfigure between two configurations. Our algorithm can compute each move in $O(1)$ amortized time. Although the computation time is not directly addressed in the previous papers, it is not hard to see that the algorithm of~\cite{aakks-quadratic04} also achieves $O(1)$ amortized (after a BFS traversal on both configurations, each step is easily computed in O(1) time). On the other hand, a na\"ive implementation of~\cite{inPlaceCompact24} would require $\Theta(n)$ time per move as a global search must be executed each time.

\ifabstract
Due to space constraints the missing proofs and details are shown in the \cref{sec:app}.
\fi

\section{Definitions and Preliminaries}\label{sec:def}

A \emph{configuration} $C$ of $n$ unit cube modules is a set of $n$ cells in the cube lattice. 
A cell is \emph{occupied} if it is in $C$ or \emph{empty} otherwise.
We abuse notation by sometimes referring to occupied cells as modules.
Two cells are \emph{adjacent} if they share a face. Let the \emph{adjacency graph} $G_C$ be the graph whose vertices are the cells in $C$ and edges are defined by pairs of adjacent occupied cells. 
We call $C$ \emph{connected} if $G_C$ is connected. A module is \emph{articulate} if it is a cut vertex in $G_C$, and \emph{nonarticulate} otherwise.
For an occupied cell $m$, the notation $\module m$ refers to the module at cell $m$. 
Note that a connected configuration $C$ defines a polycube. We denote by $\partial C$ the boundary of $C$. The \emph{outer boundary} of $C$ is the boundary of the unbounded component of the complement $\overline{C}$ of $C$.
We say a module is ``in the outer boundary'' if at least one of its faces lies in the outer boundary of $C$.


A \emph{move} is an operation that transforms an $n$-module configuration $C$ into another $C'$ so that $C\cap C'$ is a configuration of $n-1$ modules (and thus $C \setminus C'$ is a set with one module).
We require the \emph{single backbone condition}: a move between connected configurations $C$ and $C'$ is only allowed if $C\cap C'$ is also connected.
We say that the module in position $C\setminus C'$ \emph{moved} to position $C'\setminus C$. In other words, the single-backbone condition requires each moving module to be nonarticulate. 

The \emph{sliding model} allows two types of moves (refer to Fig.~\ref{fig:Cubemoves}):
\begin{itemize}
  \item A \emph{slide} moves a module $\module a$ from $a$ to an adjacent empty cell $b$, and requires that there are adjacent occupied cells $a'$ and $b'$ such that $a$ is adjacent to $a'$ and $b$ is adjacent to $b'$.
  \item A \emph{convex transition} moves a module $\module a$ from $a$ to an empty cell $b$ where $a$ and $b$ share a common edge $e$, and are both adjacent to an occupied cell $c$, and requires that the cell $d\notin\{a,b,c\}$ that contains $e$ is empty. Note that every edge $e$ is incident to exactly $4$ cells.
\end{itemize}

Note a slide requires the final cell $b$ to be empty, and a convex transition requires the target $b$ as well as an intermediate cell to be empty.
These are called the \emph{free-space requirements} of the moves. If these requirements are not met, performing either of these moves would cause two modules to collide.
Note that a module might be nonarticulate but is not allowed to move if the free-space requirements mentioned above are not satisfied.
We call a module \emph{movable} if it is nonarticulate and satisfies the free-space requirements of either a slide or a convex transition.

We may refer to a cell using the coordinates of its closest corner to the origin. 
Hence cell $(0,0,0)$ corresponds to the unit cube with vertices $(0,0,0)$, $(0,0,1)$, $(0,1,0)$, $(1,0,0)$, $(0,1,1)$, $(1,1,0)$ and $(1,1,1)$.
We denote by $\|\module m\|_1$ the $L_1$ norm of the cell occupied by a module $\module m$.
For example, if $\module m$ occupies $(x_{\module m}, y_{\module m}, z_{\module m})$, then $\|\module m\|_1=|x_{\module m}| + |y_{\module m}| + |z_{\module m}|$. 

We assume that both configurations are contained in the positive orthant and that a corner of the bounding box of either configuration is the origin.

Our algorithm uses the {\em slice graph} from Fitch and Rus~\cite{fitch2003reconfiguration}. For a connected configuration $C$ of $n$ modules and $z_0 \in \mathbb{Z}$, the {\em slice} at height $z$ is $C \cap \{z=z_0\}$ (the modules whose $z$-coordinate is equal to $z_0$). Each maximally connected component in a slice is called a {\em cluster}. Each cluster defines a vertex of the slice graph. Two vertices are connected if at least one module in each cluster share a face (see Fig.~\ref{fig:slice-graph}).

\begin{figure}[ht!]
    \centering
    \includegraphics[scale=0.5]{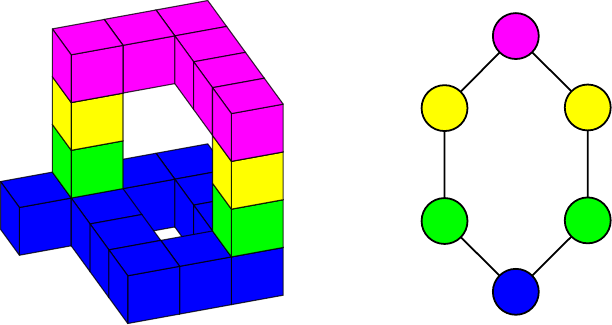}
    \caption{A configuration (left) and its slice graph (right). }
    \label{fig:slice-graph}
\end{figure} 

A cell $c$ \emph{dominates} another cell $c'$ if $c'$ is contained in the minimum axis aligned bounding box containing $c$ and the origin. We also say that $c=(x,y,z)$ is \emph{below} $c'=(x',y',z')$ if $x=x'$, $y=y'$ and $z< z'$ (similarly, we define \emph{above} in a similar way.

A module is called \emph{compact} if every cell dominated by it is occupied.
A module $\module m$ is called \emph{quasi-compact} if every cell that is dominated by $\module m$ and is not below is occupied. A configuration or cluster is \emph{compact} (resp., \emph{quasi-compact}) if every module in it is compact (resp., quasi-compact).
A cluster is \emph{extremal} if not all of its modules are quasi-compact, and all adjacent modules above it (if any) are quasi-compact. (Note that, by definition, a cluster above a non-quasi-compact cluster may be quasi-compact.)

\subsection{General strategy.} A common technique in reconfiguration algorithms is to define a \emph{canonical} configuration $C^*$ and describe a sequence of moves that transform a given configuration $C$ into $C^*$. 
Because the moves are reversible, this implies a solution for the reachability problem: Given two configurations $C$ and $C'$, one can obtain reconfiguration sequences from both to $C^*$, and then compose the reconfiguration from $C$ to $C^*$ with the reverse of the reconfiguration from $C'$ to $C^*$.
Instead of a single configuration, we use a class of canonical configurations: all compact configurations of $n$ modules.
The strategy is to compute a sequence of moves to transform $C$ and $C'$ to two configurations $D$ and $D'$ of this class. We then transform $C$ into $D$, then $D$ into $D'$, and finally reconfigure $D'$ to $C'$. 
Thus, our algorithm can be divided into two phases: reconfiguration between a given configuration and a compact one (Section~\ref{sec:compactify}); and reconfiguration between two compact configurations (Section~\ref{sec:comp2comp}). 

\section{Compacting configurations}\label{sec:compactify}
In this section, we describe an algorithm we call \textsc{Compactify}, which reconfigures a given configuration into a compact one.
As a preprocessing step, use $\LocateAndFree$ (from \cite{aakks-quadratic04}) to obtain six movable modules on the outer boundary. These modules will be referred to as {\em musketeers}.

Once done, the algorithm repeatedly picks an extremal cluster $S$ and uses the musketeers to move every module in $S$ without a lower neighbor to a position with a smaller $z$-coordinate. 
By doing so, $S$ will merge with at least one cluster in the slice below. 
Initially, a module is only moved if the position below it is empty. Otherwise, it is left where it was. We call these leftover modules \emph{stragglers}. We use another procedure $\Fix$, to find a suitable position for the stragglers.
The result is that every module previously in $S$ either has a lower $z$-coordinate or becomes quasi-compact. By iterating over all clusters we will finish with a compact configuration or all modules in the $z=0$ plane.
Pseudocode of $\textsc{Compactify}$ is presented in Alg.~\ref{alg:Comp} \ifabstract (in \cref{sec:app})\fi. 

\both{
\subsection{Obtaining Musketeers}\label{sec:ob-musket}}
The main role of the musketeers will be to provide temporary connectivity to nearby modules. We use a subroutine $\LocateAndFree$ from~\cite{aakks-quadratic04} to gather them above the highest cluster.

\both{
\begin{lemma}
\label{lem:buildMusketeers}
A constant number $k$ of musketeer modules can be positioned on the outer boundary of any configuration in at most $4kn+O(1)$ moves.
\end{lemma}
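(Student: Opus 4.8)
The plan is to reduce to placing a single musketeer and then iterate $k=O(1)$ times. Concretely, I would show that one movable module can be added to the outer boundary in $4n+O(1)$ moves while the at most $k-1$ previously placed musketeers sit ``parked'' in a small gadget at the very top of the configuration; since there are only a constant number of parked musketeers, keeping the working module clear of them (and keeping them clear of one another) costs only $O(1)$ extra moves per round, so $k$ rounds give $4kn+O(1)$ moves in total.

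For a single round, I would first call $\LocateAndFree$ from~\cite{aakks-quadratic04}: in $O(1)$ moves it returns a module $\module a$ that lies on the outer boundary, is nonarticulate, and is movable. The structural fact underneath is that a connected polycube always has a boundary module that can be detached without disconnecting the rest; the ``free'' step is a constant-size local rearrangement that additionally guarantees $\module a$ can begin to move. From here on $\module a$ is a pendant of the backbone $B$ (the configuration with $\module a$ removed): no matter how $\module a$ moves, $B$ is unchanged and connected, so the single-backbone condition holds automatically for every move in the round.

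Next I would \emph{walk} $\module a$ along the outer boundary to the cell directly above the topmost cluster (the one of maximum $z$). That target is a legitimate musketeer site: all of its neighbours except the one below it are empty, it is reachable from the unbounded component of the complement, hence it is on the outer boundary, and a module there can perform a convex transition (or a slide if the top cluster has more than one cell), so it is movable. A module can traverse the outer boundary one cell per move by alternating slides (along coplanar faces) and convex transitions (around edges), and the free-space requirements are always met because the walk stays on the outer boundary, where the ``outside'' cells are empty. For the route I would use ``go up whenever the cell above is free, otherwise follow the outer boundary of the current slice''; this visits each nonempty slice at most once, costs at most $2n_i+O(1)$ horizontal moves in a slice with $n_i$ modules plus one up-move to leave it, and $\sum_i n_i = n$, so a linear bound follows immediately; tightening the slice-by-slice accounting (in particular never circling a slice more than once) yields the stated $4n+O(1)$. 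Finally, park $\module a$ in the top gadget so that it does not block later rounds.

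The main obstacle is the last step: proving that a single module can follow \emph{any} outer-boundary path using only the two allowed moves — this needs a small case analysis on the local shape of the boundary at convex versus reflex edges, together with the (easy) verification of the free-space requirements — and then pinning the number of steps to $4n+O(1)$ rather than to a weaker linear bound. The remaining ingredients are routine: nonarticulateness and initial movability of $\module a$ come directly from $\LocateAndFree$; connectivity during the walk is automatic because $\module a$ is pendant; and the passage from one musketeer to $k$ is just a constant-factor blow-up plus $O(1)$ bookkeeping.
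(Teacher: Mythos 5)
There is a genuine gap, and it sits exactly where you lean on $\LocateAndFree$. You treat it as a constant-move local operation (``in $O(1)$ moves it returns a module $\module a$ that \ldots is nonarticulate and movable'', justified by ``a constant-size local rearrangement''), and then spend your $4n+O(1)$ budget entirely on walking $\module a$ to the top. But the cited subroutine is not local: Lemma~6 of~\cite{aakks-quadratic04} is a statement about \emph{reconfiguring} $C$ (keeping its outer face) so that some outer-face module becomes nonarticulate, and Lemma~8 bounds one call only by $O(n)$ moves. In general a nonarticulate outer-boundary module need not satisfy the free-space requirements, and producing a movable one can require moving $\Theta(n)$ modules; that is the entire reason the subroutine exists. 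With the freeing phase correctly costed at $cn$ moves and your walk at up to $4n+O(1)$ moves, your per-round total is $(4+c)n+O(1)$, which does not give $4kn+O(1)$.

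The paper closes this by a single global charging argument rather than a ``free then walk'' split: during one execution of $\LocateAndFree$ (which already subsumes the repositioning), the moving modules visit pairwise disjoint sets of faces of the static modules, each face at most once, and the surface area of an $n$-cube polycube is at most $4n+O(1)$ (each added cube removes at least one boundary face and adds at most five). That is what pins the constant to $4$ per musketeer. If you want to keep your decomposition, you would need either to prove the freeing step really is $O(1)$ (it is not in general) or to run the same face-charging argument over the union of both phases. A secondary, smaller issue: your claim that a pendant module can always traverse the outer boundary with free-space requirements ``always met because the outside cells are empty'' needs care (narrow slits and reflex edges of the outer boundary can block both slides and convex transitions), and your handling of the case where the $i$-th freed module is the only neighbor of an earlier one is swept into the ``parking gadget''; the paper resolves the latter explicitly by swapping labels.
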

}
\later{
\begin{proof}
The proof follows from~\cite[Lemma 6]{aakks-quadratic04}, which states that we can reconfigure a configuration $C$ such that its outer face stays the same and one of its modules in the outer face becomes nonarticulate. 
Due to this, the reconfiguration of $C \setminus \{\module m_1\}$ does not require coordination with ${\module m_1}$.
If ${\module m_2}$, becomes nonarticulate after the second application of $\LocateAndFree$, it happens to be the only neighbor of ${\module m_1}$, we exchange their labels.
We call $\LocateAndFree$ a constant number of times and each call requires $O(n)$ moves~\cite[Lemma 8]{aakks-quadratic04}. 
Although not explicitly stated in~\cite{aakks-quadratic04}, each call of $\LocateAndFree$ takes at most $4n+O(1)$ moves because each module that moves during its execution visits a disjoint set of faces of the static modules, and each face is visited at most once.
The maximum surface area of a polycube with $n$ cubes is $4n+O(1)$: inductively, each added cube removes at least one face and adds at most 5 faces to the boundary.
\end{proof}
}


\later{
\begin{algorithm}[!ht]

  \caption{$\textsc{Compactify}$(C)}
  \label{alg:Comp}
  \begin{algorithmic}[1]
    \STATE Compute the slice graph of $C$.
    \WHILE{$C$ is not quasi-compact (ignoring slice $z=0$)}  \label{ln:comp-while}
    \STATE Let $R$ be the cluster where the musketeers currently sit. DFS on the slice graph from $R$, preferring upward edges (in the $z$ direction). Let $S$ be the first extremal cluster (with possibly $S=R$).
    \STATE Move the musketeers to $S$.\label{ln:muskToCluster}
    \STATE Lower non-quasi-compact modules in $S$ as in \cref{sec:ap-lower}.\label{ln:lower}
    \STATE Let $M$ be the set of stragglers $\module m$ sorted by $\|\module m\|_1$.\label{ln:straggler-order}
    \FOR{$\module m \in M$}
        \STATE \hyperref[alg:fix]{$\Fix$}($\module m$)\label{ln:cmptx-fixCall}
    \ENDFOR
    \ENDWHILE
    \WHILE{$\exists \module m$ where $\module m+(0,0,-1)$ is empty and $z_{\module m}>0$}\label{ln:compacting-quasi}
    \STATE Slide $\module m$ down\label{ln:slide-m}
    \ENDWHILE
    \IF{$C$ is not compact}
        \STATE Apply the planar version of \textsc{Compactify} on slice $z=0$
    \ENDIF
  \end{algorithmic}
\end{algorithm}
}

\begin{figure}[t]
    \centering
    \begin{subfigure}[b]{0.32\linewidth}
        \centering
        \includegraphics[width=.9\linewidth]{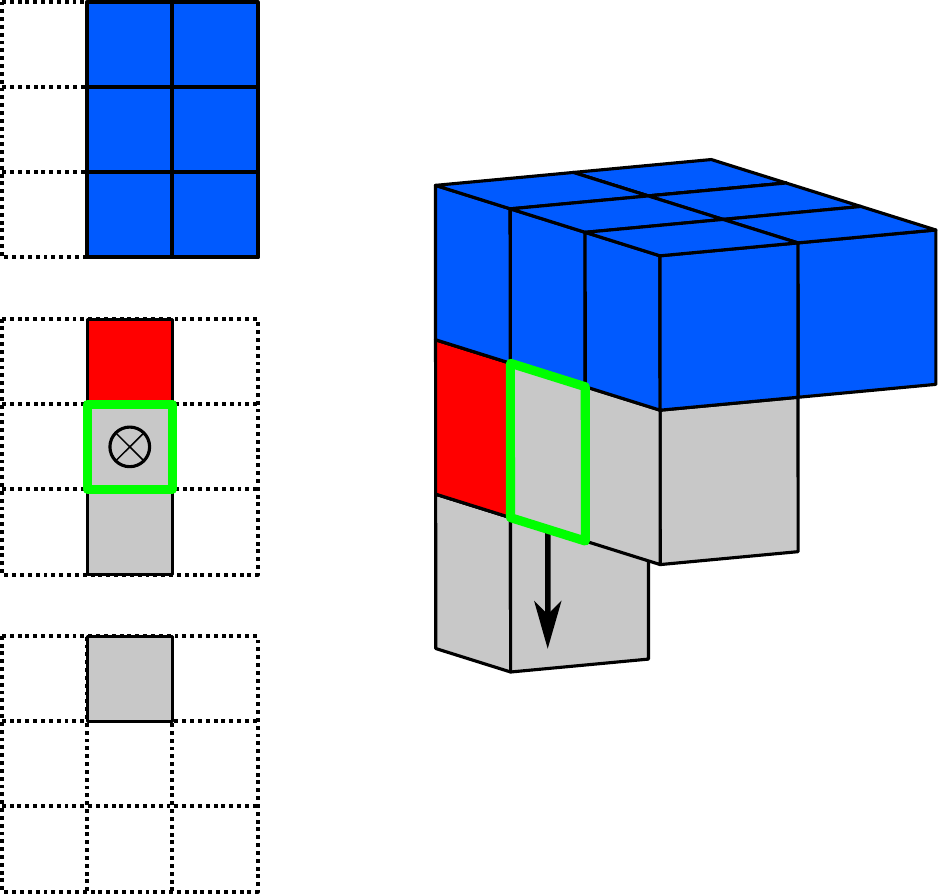}
        \caption{}
        \label{fig:lower-str-a}
    \end{subfigure}%
    \hfill
    \begin{subfigure}[b]{0.32\linewidth}
        \centering
        \includegraphics[width=.9\linewidth]{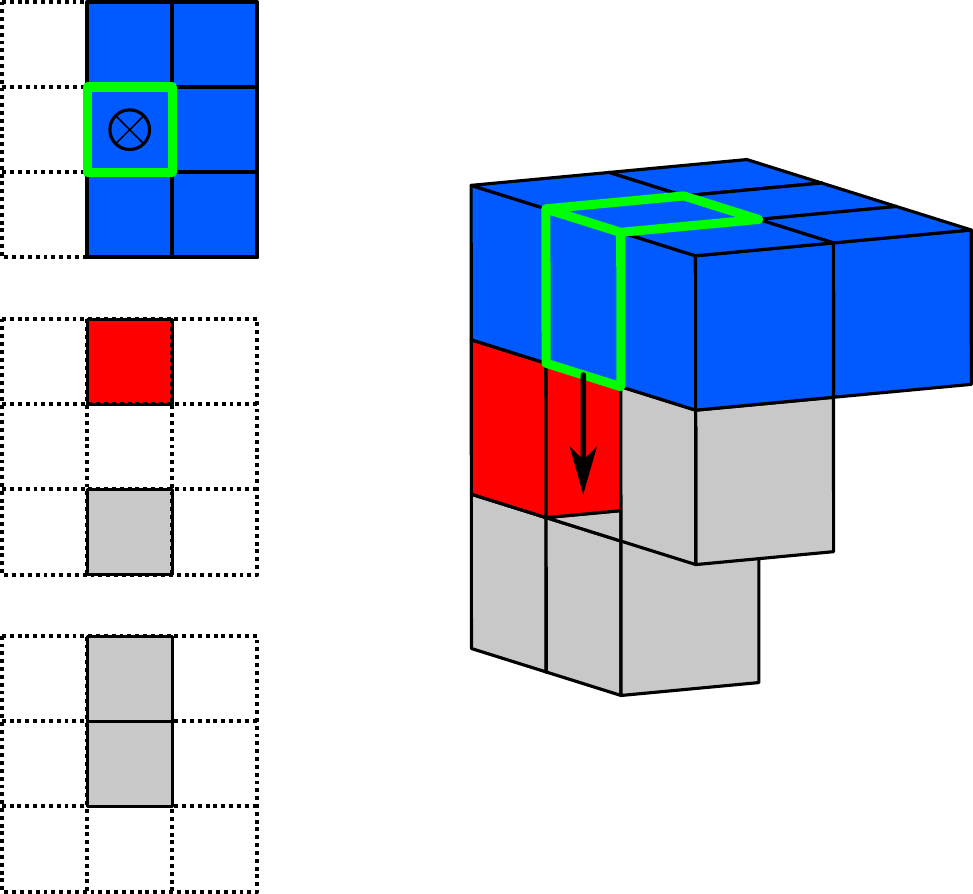}
        \caption{}
    \end{subfigure}%
    \hfill
    \begin{subfigure}[b]{0.32\linewidth}
        \centering
        \includegraphics[width=.9\linewidth]{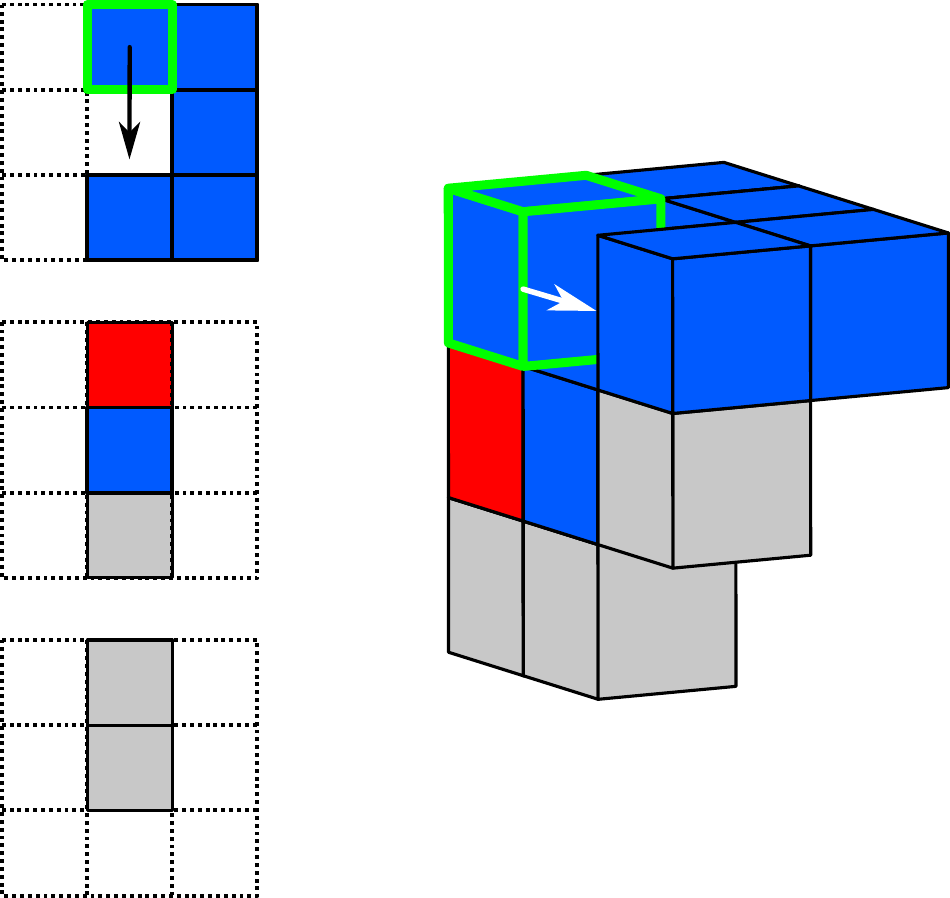}
        \caption{}
    \end{subfigure}%
   
    \begin{subfigure}[b]{0.32\linewidth}
        \centering
        \includegraphics[width=.9\linewidth]{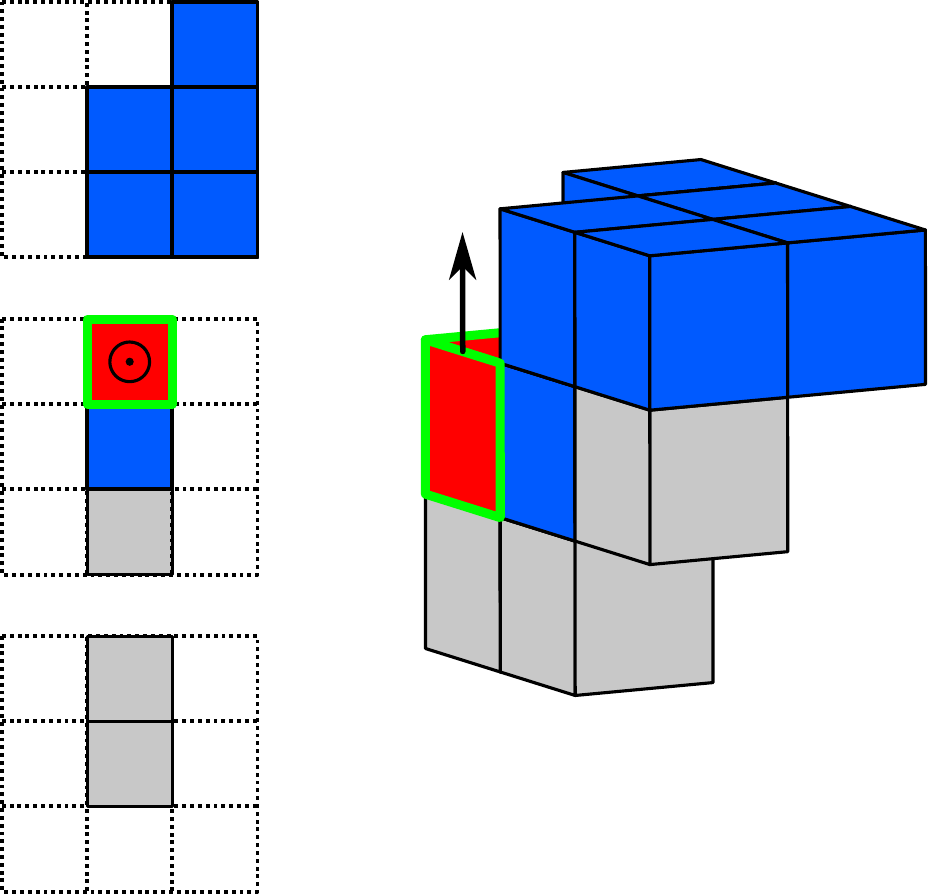}
        \caption{}
    \end{subfigure}%
    \:\:
    \begin{subfigure}[b]{0.32\linewidth}
        \centering
        \includegraphics[width=.9\linewidth,page=10]{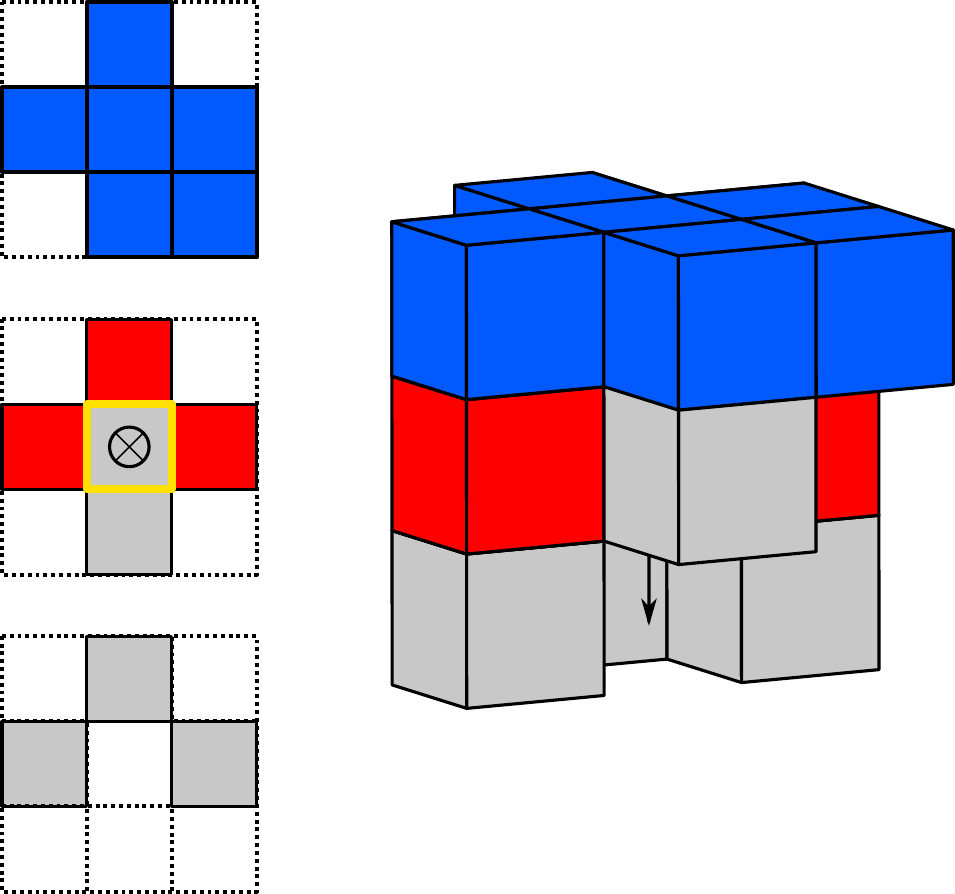}
        \caption{}
    \end{subfigure}%
    \caption{The four moves used to lower a straight degree-2 module in an extremal cluster using the musketeers for connectivity. The left of each figure shows the top view of the three relevant $z$-layers. $\odot$ and $\otimes$ represent arrows going into and out of the plane, respectively. The moving module is highlighted in green.}
    \label{fig:lower-straight}
\end{figure}

\begin{figure}[t]
    \centering
    \begin{subfigure}[b]{0.32\linewidth}
        \centering
        \includegraphics[width=.9\linewidth]{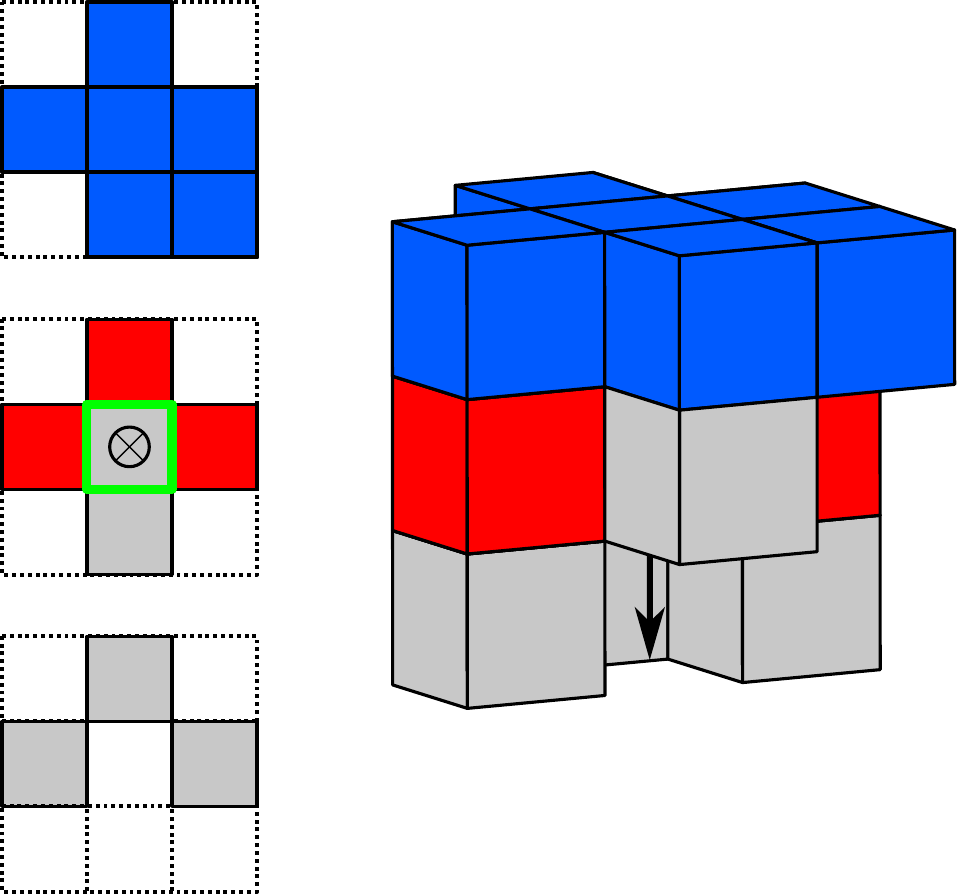}
        \caption{}
        \label{fig:lower-bend-a}
    \end{subfigure}%
    \hfill
    \begin{subfigure}[b]{0.32\linewidth}
        \centering
        \includegraphics[width=.9\linewidth]{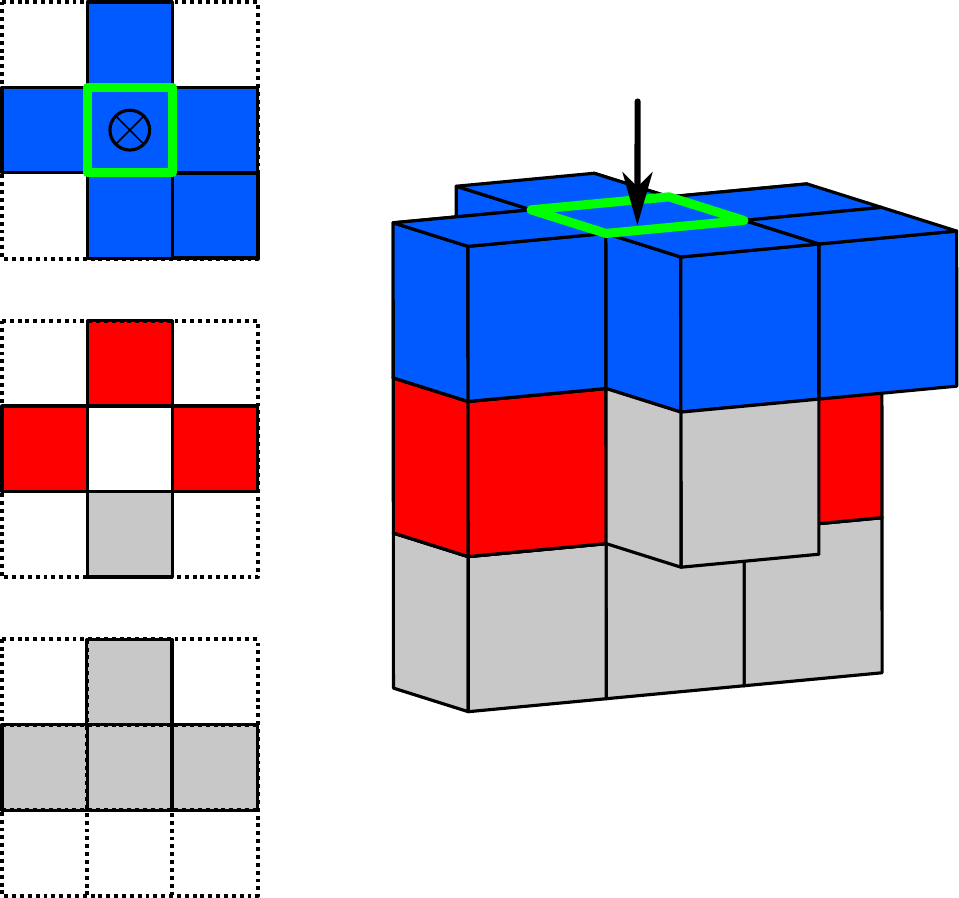}
        \caption{}
    \end{subfigure}%
    \hfill
    \begin{subfigure}[b]{0.32\linewidth}
        \centering
        \includegraphics[width=.9\linewidth]{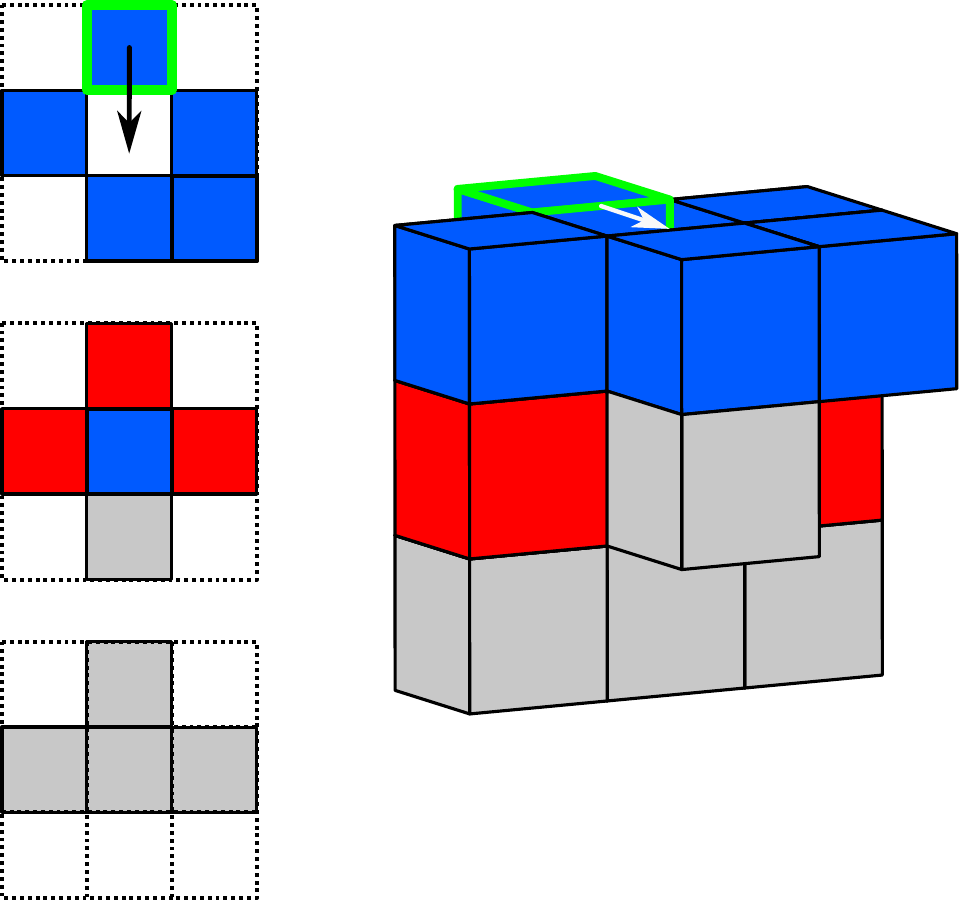}
        \caption{}
    \end{subfigure}%
    
    \begin{subfigure}[b]{0.32\linewidth}
        \centering
        \includegraphics[width=.9\linewidth]{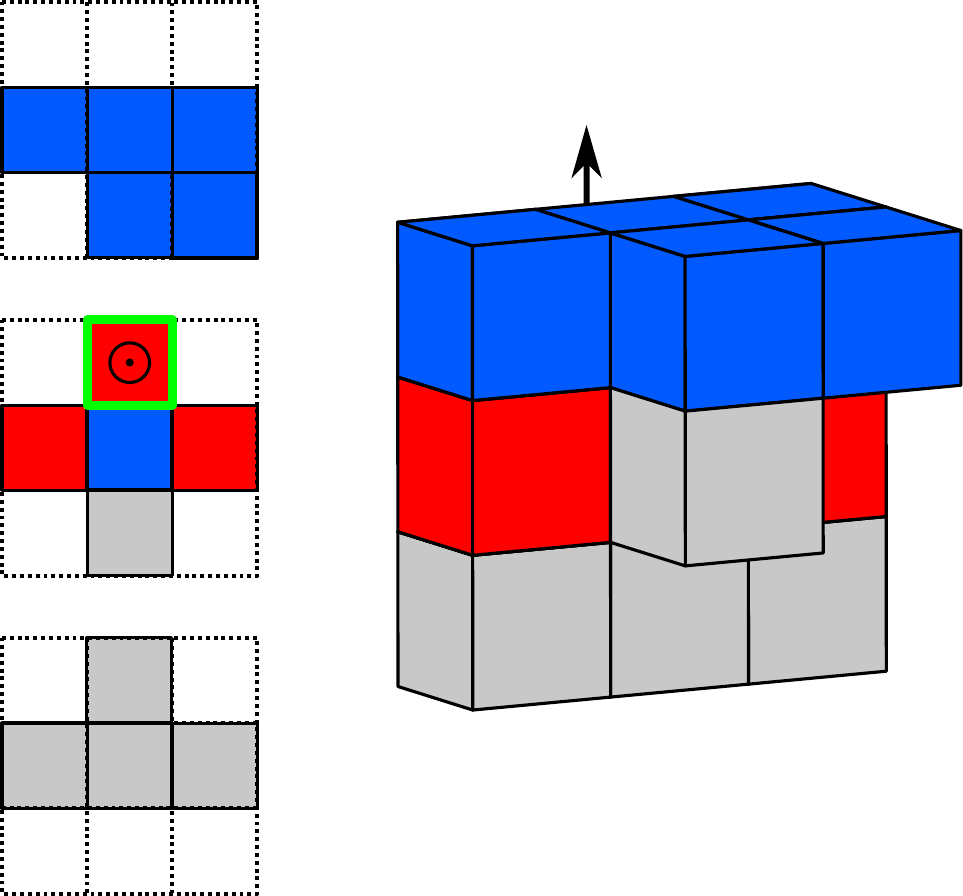}
        \caption{}
    \end{subfigure}%
    \:\:
    \begin{subfigure}[b]{0.32\linewidth}
        \centering
        \includegraphics[width=.9\linewidth,page=5]{Figures/six-general.pdf}
        \caption{}
    \end{subfigure}%
    \caption{The four moves used to lower a degree-4 module in an extremal cluster using the musketeers for connectivity. }
    \label{fig:lower-bend}
\end{figure}

\later{
\begin{figure}[t]
    \centering
    \begin{subfigure}[b]{0.25\linewidth}
        \centering        \includegraphics[scale=.2]{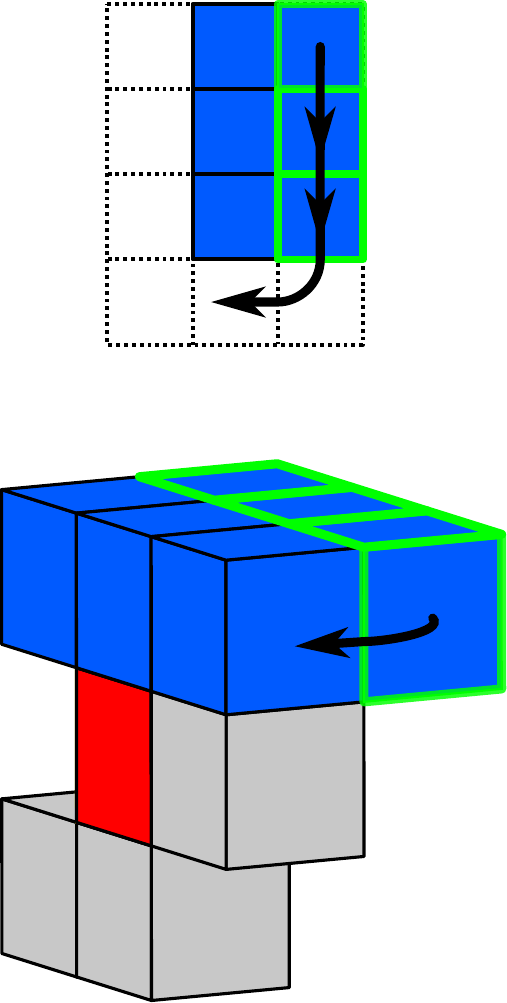}
        \caption{}
    \end{subfigure}%
    \begin{subfigure}[b]{0.25\linewidth}
        \centering        \includegraphics[scale=.2]{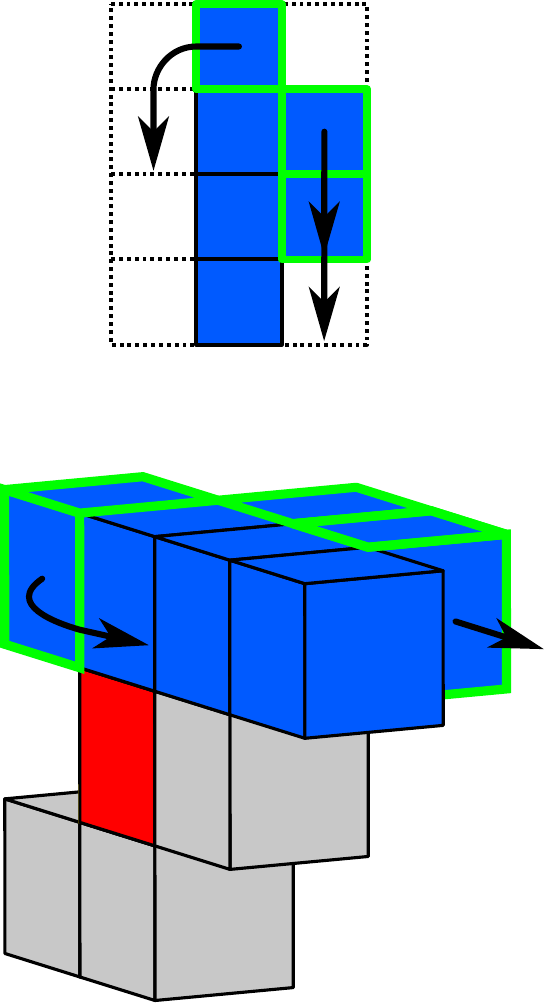}
        \caption{}
    \end{subfigure}%
    \begin{subfigure}[b]{0.25\linewidth}
        \centering        \includegraphics[scale=.2]{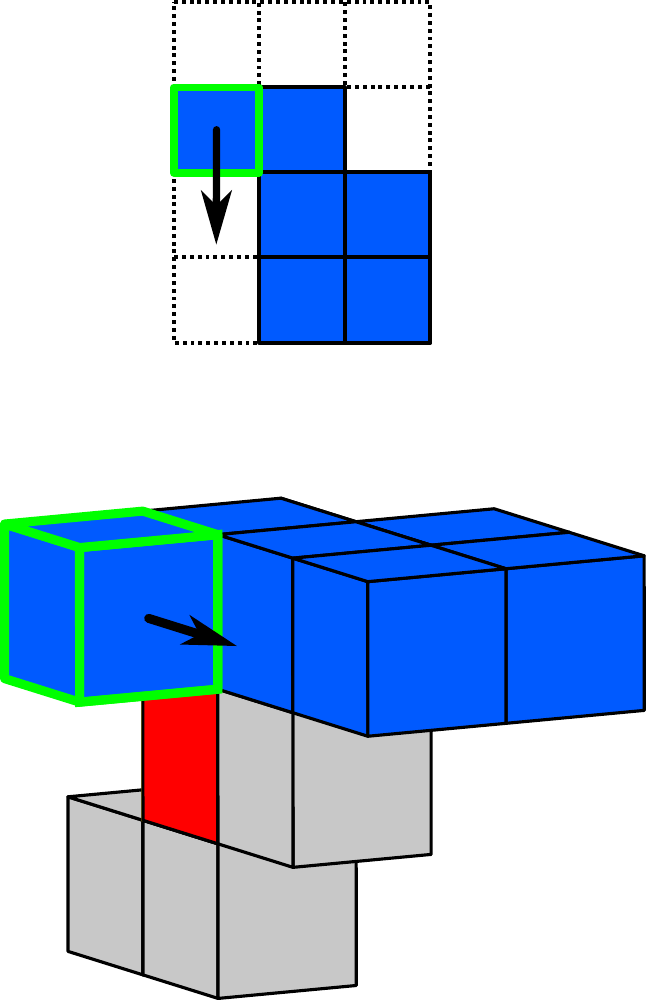}
        \caption{}
    \end{subfigure}%
    \begin{subfigure}[b]{0.25\linewidth}
        \centering        \includegraphics[scale=.2,page=14]{Figures/six-general.pdf}
        \caption{}
    \end{subfigure}%
    \caption{With 6 moves, the musketeers can advance by one in the post order above the extremal cluster.}
    \label{fig:lower-advance}
\end{figure}
}
\subsection{Lowering Modules}
\label{sec:ap-lower}

Given an extremal cluster $S$ let $S^*$ be the set of non-quasi-compact modules in $S$, and let $T$ be a minimum Steiner tree of $S^*$ using the quasi-compact modules in $S$ (if any) as Steiner vertices.
We process $T$ in post-order, rooted arbitrarily, moving it to the z-layer below if it has no lower neighbor.
Lowering is accomplished via the musketeer modules (shown in blue in \Cref{fig:lower-straight,fig:lower-bend}). The red modules are modules that have already been processed. If $\module p$ has a lower neighbor, it cannot move directly down, thus it becomes a straggler. (After completing our traversal of $T$ we return and ``fix'' them.). 

Since we visit modules in post order, we can guarantee the children of the module currently being processed (if any) have been already processed. This implies that, when processing a module, the neighbors in the same z-layer are either red or is the parent of the module in $T$. The exact strategy depends on the degree of $\module p$ in $T$.
\ifabstract
The lowering strategy for a module of degree 4 is shown in ~\cref{fig:lower-bend}. Details for other cases are in the Appendix.
\fi

\later{
If $\module p$ is a non-straggler degree-1, simply move it below its parent. (This does not require the presence of the musketeers to aid in connectivity.)
We maintain the invariant that every module that has been processed (and therefore lies in the $z$-layer below its original position) and whose parent have not yet been processed has an upper neighbor connecting it to its parent (shown in red in \cref{fig:lower-straight,fig:lower-bend}). 
If $\module p$ has degree 2 in $T$ where $\module p$, its child, and its parent are collinear (``straight'' degree-2), we use the formation of musketeers shown in \cref{fig:lower-straight} and, with the four moves shown in the figure, we move $\module p$ down one unit while maintaining the invariant.
Note that in the last step, the red module swaps roles with a musketeer, becoming a musketeer itself while one of the musketeers becomes a red module.
In the remaining cases ($\module p$ has degree 3 or 4), we can assume without loss of generality that $\module p$ has a child at position $(1,0,0)$ (relative to $\module p$) and its parent is at position $(0,-1,0)$.
\cref{fig:lower-bend} shows the formation of musketeers and the four moves that lower $\module p$ while maintaining the invariant if $\module p$ has degree 4 in $T$.
If $\module p$ has degree 3, note that the second move would disconnect the musketeers at $\module p+(0,1,1)$ or $\module p+(-1,0,1)$ if $\module p$ has no child at $\module p+(0,1,0)$ or $\module p+(-1,0,0)$, respectively.
In such cases we can replace this move by moving one of these two musketeers that has no lower neighbor into the original position of $\module p$.
We can then get a red module to the previous position of the last moving musketeer in two moves.
As before, the roles of a musketeer and a red module get swapped. 
The remaining red modules are not needed anymore for connectivity and thus get marked as stragglers (shown in dark gray in \cref{fig:lower-bend}(e)).
Therefore, after processing $\module p$, processing its parent requires one of two possible configurations of musketeers. It is relatively easy to verify only six moves are required to form these. \cref{fig:lower-advance} shows an example moving between different formations.
After processing every module in $T$ we move some stragglers as follows. We can assign each original leaf of $T$ to a straggler that either was created at the neighborhood of a high degree node (dark grey in \cref{fig:lower-bend}(e)) or was created at the root of $T$. (Note that the root of $T$ always becomes a straggler by construction).
We do that so that the paths between the stragglers and assigned leaves in $T$ are disjoint.
Move such stragglers along $T$ to the position immediately below the original position of their assigned leaves.
}

\both{\begin{lemma} \label{lem:melt}
    Given a Steiner tree $T$ of the non-quasi-compact modules of an extremal cluster $S$ and 6 musketeer modules positioned above $T$ as in \cref{fig:lower-str-a} or \cref{fig:lower-bend-a}, 
    let $\ell$ be the number of modules that were lowered and $s$ be the number of remaining stragglers ($|T|=\ell+s$).
    Then, $17\ell+13s$ moves suffice to move down by one unit every module in $T$ that has no lower neighbor while maintaining connectivity.  
\end{lemma}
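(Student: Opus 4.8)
The plan is to carry out a bookkeeping argument over the post-order traversal of the Steiner tree $T$, charging each move to a module of $T$ and verifying that the per-module costs add up to at most $17\ell + 13 s$. The preliminary work is to classify each module $\module p$ of $T$ by its degree in $T$ and by whether it becomes a straggler, and to recall from the construction in Section~\ref{sec:ap-lower} exactly how many moves each case consumes. A non-straggler degree-1 leaf moves below its parent for free (1 move, and no musketeer repositioning since connectivity is maintained through the parent). A non-straggler straight degree-2 module costs the 4 moves shown in \cref{fig:lower-straight}. A non-straggler degree-4 module costs the 4 moves of \cref{fig:lower-bend}; a degree-3 module costs the same or is handled with the substitute moves described (replacing the problematic second move, then two moves to restore a red module), which one checks also fits in 4 moves. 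On top of the lowering moves, each transition between successive modules in the post-order may require reforming the musketeer formation, which costs at most $6$ moves (\cref{fig:lower-advance}); there are $|T|-1 = \ell + s - 1$ such transitions. Finally, the straggler-repair step at the end moves each straggler along a disjoint path in $T$ to the cell below its assigned leaf.

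The heart of the argument is the amortization. I would split the total into (i) lowering moves, (ii) formation-advance moves, and (iii) straggler-repair moves. For (i): every lowered non-root module costs at most $4$ moves, so at most $4\ell$; the root is a straggler by construction so it costs $0$ here. For (ii): $\ell + s - 1 \le \ell + s$ advances at $6$ moves each gives at most $6(\ell+s)$; but I would be more careful — an advance is only needed when moving to a module that must actually be lowered, i.e.\ transitions into a straggler-to-be need not reform the formation, which lets me bound (ii) by $6\ell$ plus a bounded term, or more conservatively absorb it. For (iii): each straggler is walked along a root-to-leaf path in $T$; the disjointness of these paths (guaranteed by the assignment of leaves to stragglers) means the total length of all straggler-repair paths is at most $|T| = \ell + s$, and each unit of path costs a bounded number of moves (a slide or convex transition plus possibly helper moves). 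Summing the three contributions and grouping terms so that each module of $T$ is charged $17$ if it is lowered and $13$ if it remains a straggler is the target; I would choose the split of the shared formation-advance and repair costs between the $\ell$-bucket and the $s$-bucket so the coefficients come out exactly as stated, using the slack between $4$ (raw lowering cost) and $17$, and between $0$ and $13$, to absorb everything.

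The main obstacle I anticipate is getting the constants to land on precisely $17$ and $13$ rather than merely $O(\ell + s)$. This requires being disciplined about three things: (a) the degree-3 case, where the substitute moves must be verified not to exceed the degree-4 budget and not to create extra stragglers beyond those already accounted; (b) the musketeer--red-module role swaps, which must be tracked so that the invariant "every processed module whose parent is unprocessed has an upper neighbor to its parent" is genuinely maintained at every step (otherwise a lowering move would violate the single-backbone condition), and so that no swap is double-counted; and (c) the interaction between the formation-advance cost and the straggler assignment — the dark-grey stragglers created near high-degree nodes must be exactly the ones absorbing the leaf assignments, and their repair paths must be shown disjoint from one another and from the root straggler's path. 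I would handle (b) by stating the invariant explicitly and checking it survives each of the four move-blocks in \cref{fig:lower-straight} and \cref{fig:lower-bend}, and handle (c) by a short combinatorial lemma: in any tree, one can injectively assign to each leaf an ancestor-marked node so that the leaf-to-marked-node paths are edge-disjoint — which is immediate from processing leaves bottom-up. Once these are pinned down, the final inequality $4\ell + 6(\ell+s) + (\text{repair}) \le 17\ell + 13s$ follows by routine arithmetic, with comfortable slack.
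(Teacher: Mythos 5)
There is a genuine gap in your accounting of the musketeer travel cost (your item (ii)). You count $|T|-1$ formation advances at $6$ moves each, treating consecutive modules in the post-order as if they were always one advance apart. They are not: the musketeers must physically walk along $T$, and a post-order traversal of a tree is an Euler-tour-like walk that traverses each edge twice, so a module of degree $d$ is visited up to $d$ times (up to $4$), and the total number of $6$-move advances is up to $2(|T|-1)$, i.e.\ roughly $12$ moves per module, not $6$. The paper gets the per-module bound of two visits not by counting transitions but by an amortization: every module of degree $3$ or $4$ in $T$ forces the existence of extra leaves, and leaves need no musketeer visit at all, so the surplus visits at high-degree nodes are charged to leaves. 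Your further refinement (``transitions into a straggler-to-be need not reform the formation, so bound (ii) by $6\ell$'') goes in the wrong direction and is also false: stragglers of degree $\ge 2$ still require the musketeers overhead for connectivity; only the final $4$ lowering moves are saved.

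This matters because, with the correct counts, the arithmetic is exactly tight: $4\ell$ (lowering) $+\,12(\ell+s)$ (musketeer visits) $+\,(\ell+s)$ (straggler repair, at exactly one move charged per module adjacent to a moving straggler, using the disjointness of the repair paths) $= 17\ell+13s$. There is no ``comfortable slack,'' so your plan to absorb loosely bounded terms --- e.g.\ ``a bounded number of moves'' per unit of straggler path, or the unquantified cost of the degree-3 substitute moves --- cannot work as stated; each of those constants must be pinned to exactly the values above. Your observations on the invariant (item (b)) and on the disjoint leaf-to-straggler assignment (item (c)) do match the paper's argument, but the proof does not go through until the traversal cost is counted as an Euler tour and amortized against the leaves.
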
}
\ifabstract
\textbf{Proof sketch.}
Each module is visited on average 2 times by the musketeers: for every high-degree module in $T$ (degree 3 or 4) there is a leaf that does not need to be visited. 
Each visit costs 6 moves totaling $12$ moves per module. 
\textsc{Compactify} moves stragglers toward the leaves of $T$ and each module is the base of a straggler at most once (13 moves per module).
The lowered modules cost 4 extra moves depicted in \cref{fig:lower-straight} or \cref{fig:lower-bend}.
\markatright{\QED}
\fi
\later{
\begin{proof}
    Let $S' \subseteq S$ contain the modules of $S$ that have already been lowered (initially empty).
    By the invariant, modules that have been processed but their parent has not (red modules in~\ref{fig:lower-advance}) connect $S'$ and $S$.
    By construction, the musketeers placed above module $\module p$ currently being processed act like a bridge connecting the red descendants of $\module p$ with the parent of $\module p$.
    After $\module p$ is lowered, one of the red modules become a musketeer and a musketeer is placed above $\module p$.
    During this movement, the recently lowered $\module p$ connects all its descendants and the remaining musketeers connect a red module to the parent of $\module p$.
    This maintains the invariant and, thus, connectivity.

    By the nature of the post order, the musketeers may visit a module at most 4 times (as many times as the maximum degree of $T$). On the other hand, leaves of $T$ do not need the musketeers for processing, and we can charge to each module at most two visits of the musketeers. 
    That amounts to $12$ moves per module. 
    Lastly, we can match the stragglers that were red modules to leaves so that their paths are disjoint in the last phase of the lowering algorithm. Thus, each module is adjacent to these moving stragglers at most once. 
    It takes 4 moves to lower $\module p$.
    Note that these moves do not apply for stragglers. 
    The total moves per lowered module is then 17, and for stragglers is 13.   
\end{proof}
}

\both{
\subsection{Handling straggler modules}\label{sec:fix}
}
\later{
This section provides details on the handling of straggler modules (procedure $\Fix$).
Intuitively speaking, each straggler $\module m$ searches for an empty position $p$ dominated by and not directly below itself. The goal is to move $\module m$ to $p$, either directly or through a chain of moves that fill $p$ and leave $m$ empty (\Cref{fig:fix-xray}).
The pseudo-code for $\Fix$ is presented in Alg~\ref{alg:fix}. 

\begin{algorithm}[htb]
  \caption{$\textsc{Fix}(\module m)$}
  \label{alg:fix} %
  \begin{algorithmic}[1] %
    \WHILE{$\module m$ can move monotonically in the direction $-\module m$}\label{ln:fix-first-line} \label{ln:fix-1}
        \STATE{Move $\module m$ closer to the origin} \label{ln:fix-greedy}
        \IF{$\module m$ moved to a lower $z$-coordinate or became quasi-compact}\RETURN
        \ENDIF
    \ENDWHILE
    \IF{$x_{\module m}=0$ or $y_{\module m}=0$ and $\module m$ is not compact}
        \STATE Move $\module m$ (through the plane $x=-1$ or $y=-1$) to a dominated empty cell $p$\label{ln:move-moduleM}
    \ELSE
        \STATE Let $p$ be the closest dominated empty position to $\module m$ where $x_p=x_{\module m}-1$ (resp., $y_p=y_{\module m}-1$), and $\module q$ be the module at $p+(0,-1,0)$ (resp., $p+(-1,0,0)$)\label{ln:closest-p}
        \STATE{Move $\module q$ to $p$ and let $q$ be the cell left empty originally occupied by $\module q$}\label{ln:moveQ}%
    \FOR{ $i \in \{1, \dots, (z_\module m - z_q)\}$} %
        \STATE{ Move the module at $q+(0,0,i)$ to $q+(0,0,i-1)$ via slide}\label{ln:moveDown} %
    \ENDFOR %
    \STATE Slide the module at $(x_p,y_p,z_{\module{m}})$ to $\module m+(-1,-1,0)$ and slide $\module m$ to $(x_p,y_p,z_{\module{m}})$\label{ln:finalMove} 
    \ENDIF
    \RETURN
  \end{algorithmic}
\end{algorithm}

\begin{figure}[h!t]
    \centering
    \includegraphics[width=0.5\linewidth]{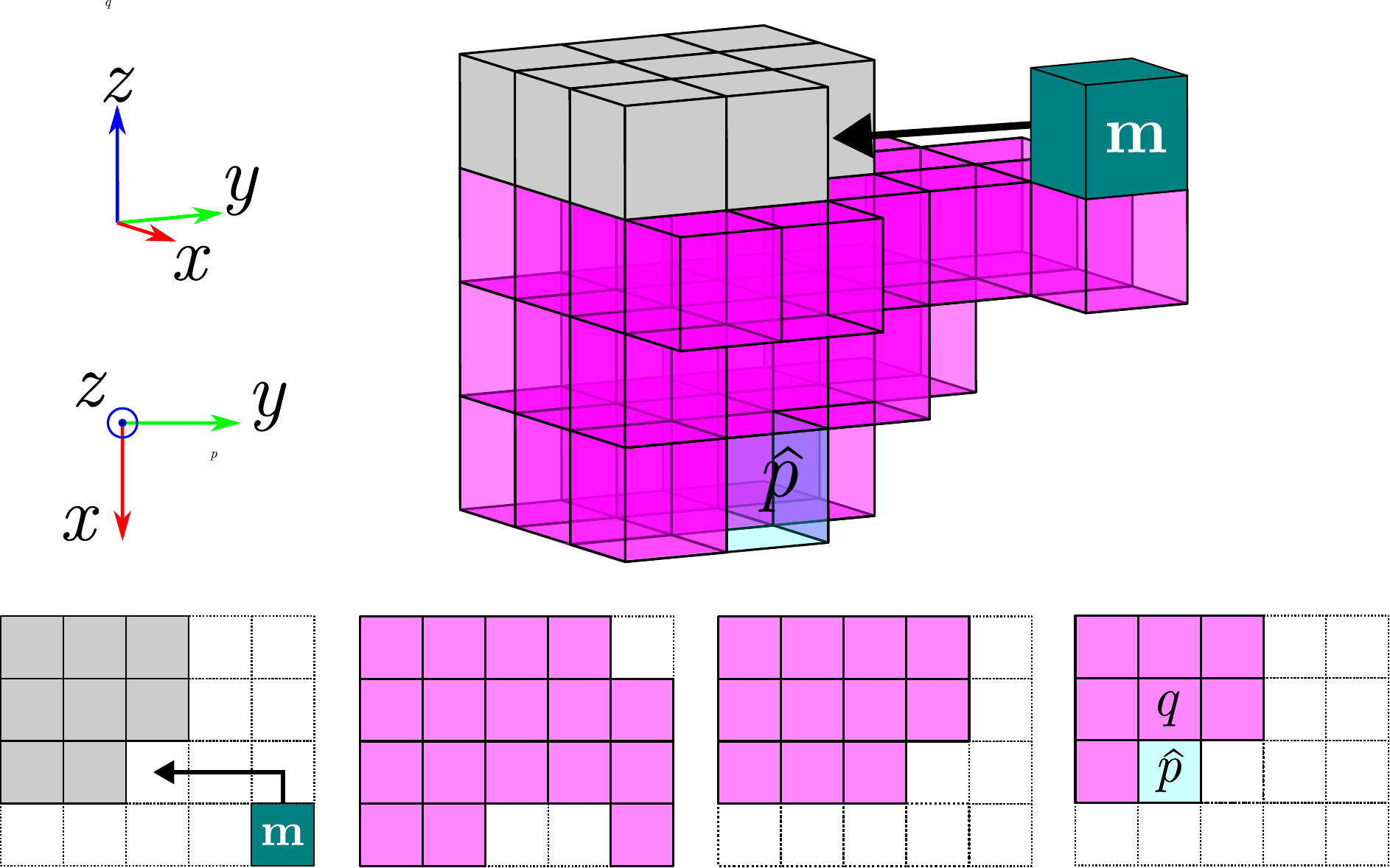}
    \caption{$\module m$ moves to a fixable position. Pink cells are occupied, but transparent for clarity. Grey modules are modules in $\square(\module m) \setminus \{\module m\}$, note each is quasi-compact.}
    \label{fig:fix-xray}
\end{figure}
}

After completing our post-order traversal, 
there is at least one straggler $\module m$.
Since $\module m$ is not quasi-compact, there is at least one empty position in the cuboid region dominated by $\module m$ and not directly below it.
In $\Fix(\module m)$ we search for such an empty position $p$ and effectively move $\module m$ to $p$, either directly or through a chain of moves that fill $p$ and leave $m$ empty.

\ifabstract The pseudo-code for $\Fix$ is presented in Alg~\ref{alg:fix} (in the Appendix) \fi. 
For a given cell $p=(x_p,y_p,z_p)$, we denote by $\square(p)$ the set of cells $(x_p-i,y_p-j,z_p)$ for all $0\le i\le x_p$ and $0\le j\le y_p$, i.e., the cells intersecting the minimum axis aligned horizontal square containing $(0,0,z_p)$ and $p$.
We define that a straggler is \emph{fixable} if every \textit{module} in $\square (\module m)\setminus\{\module m\}$ in the same cluster of $\module m$ is quasi-compact. 
We first try moving $\module m$ closer to the origin in $\square(\module m)$ (\cref{fig:fix-seq-a}).
If we can't, either (i) $\module m$ moves to a $z$-layer below it or becomes quasi-compact (and we are done); (ii) $\module m$ is on the edge of the bounding box (and we can move through the plane $x=-1$ or $y=-1$ to an empty position); or (iii) $\module m$ is blocked by other modules in the same layer that are quasi-compact (as in \cref{fig:fix-seq}).
Since $\module m$ is fixable and non-quasi-compact, there is an empty position $p$ dominated by $\module m$ in column $(x_{\module m} -1, y_{\module m},.)$ or $(x_{\module m}, y_{\module m}-1,.)$, adjacent to a module $\module q$ in column $(x_{\module m} -1, y_{\module m}-1,.)$ (\cref{fig:fix-seq-a}).
A sequence of moves fills $p$ and frees $(x_{\module m} -1, y_{\module m}-1,z_{\module m})$ (\cref{fig:fix-seq-b}).
We can then fill that empty position while freeing $\module m$'s cell (\cref{fig:fix-seq-c}).

\begin{figure}[h!t]
    \centering    
    \newcommand\scl{.15}
    \begin{subfigure}[b]{0.5\linewidth}
        \centering        \includegraphics[scale=\scl]{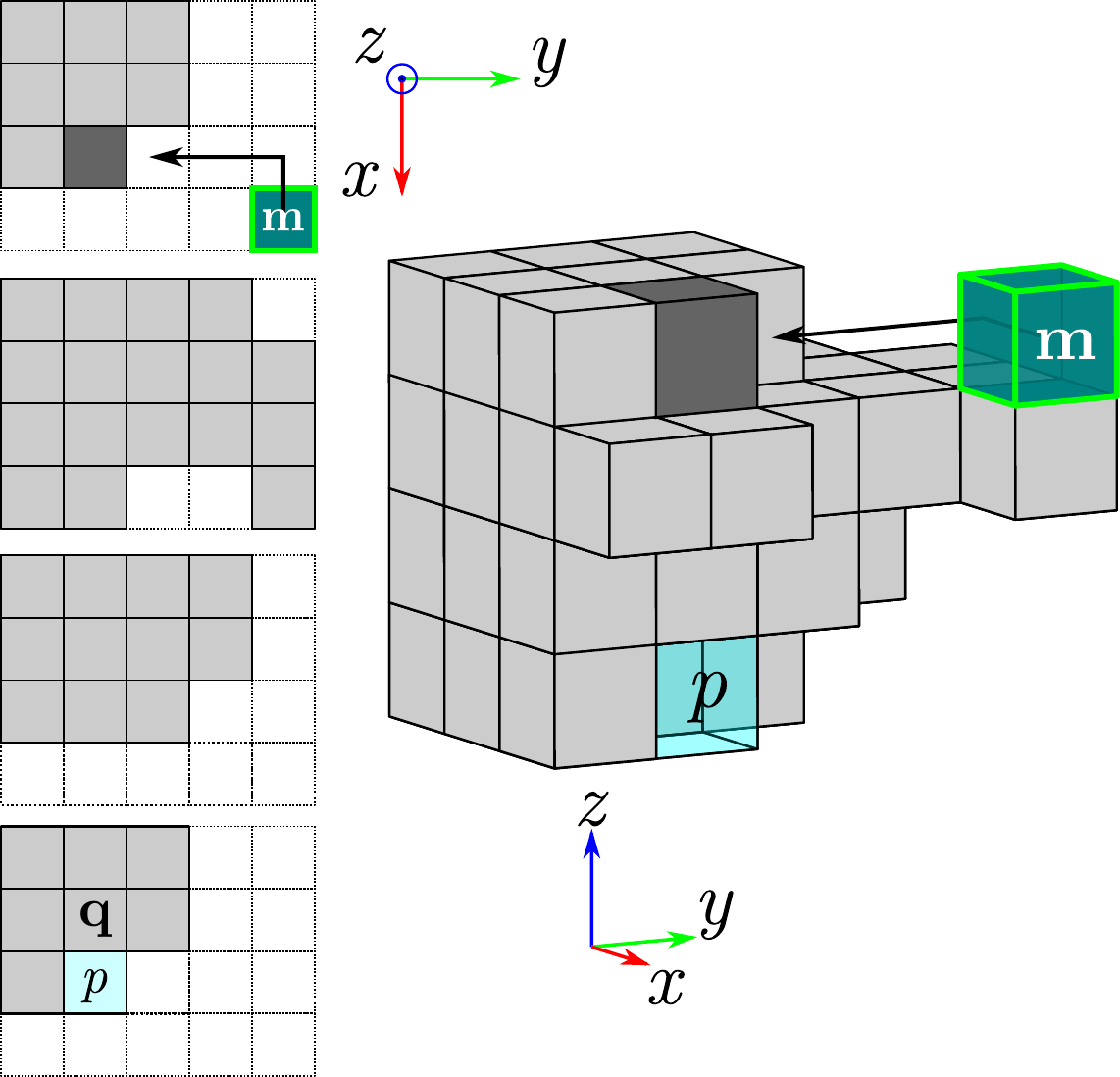}
        \caption{}
        \label{fig:fix-seq-a}
    \end{subfigure}%
    \hfill    
    \begin{subfigure}[b]{0.5\linewidth}
        \centering        \includegraphics[scale=\scl]{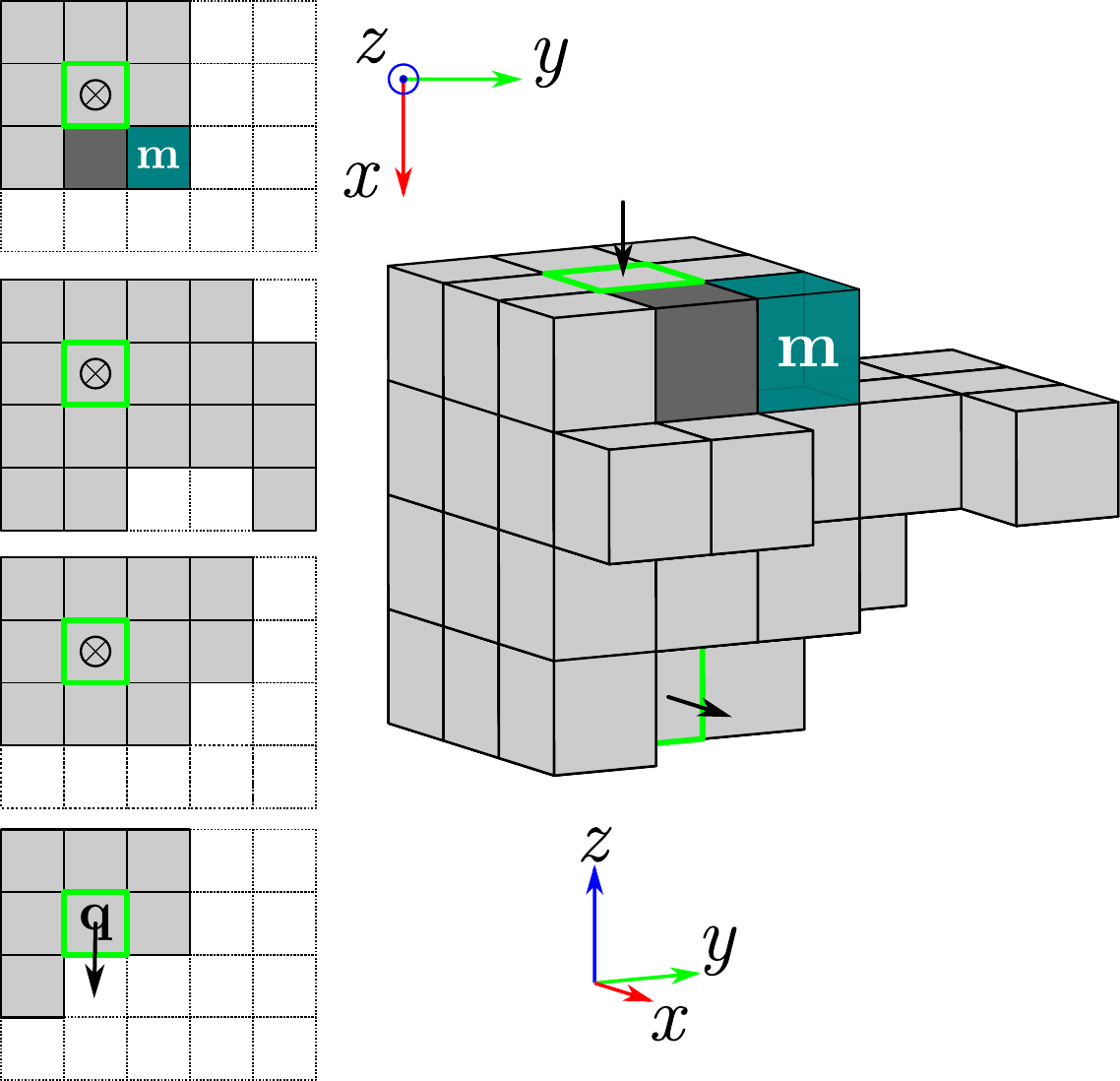}
        \caption{}
        \label{fig:fix-seq-b}
    \end{subfigure}%
    \par    
    \begin{subfigure}[b]{0.5\linewidth}
        \centering        \includegraphics[scale=\scl]{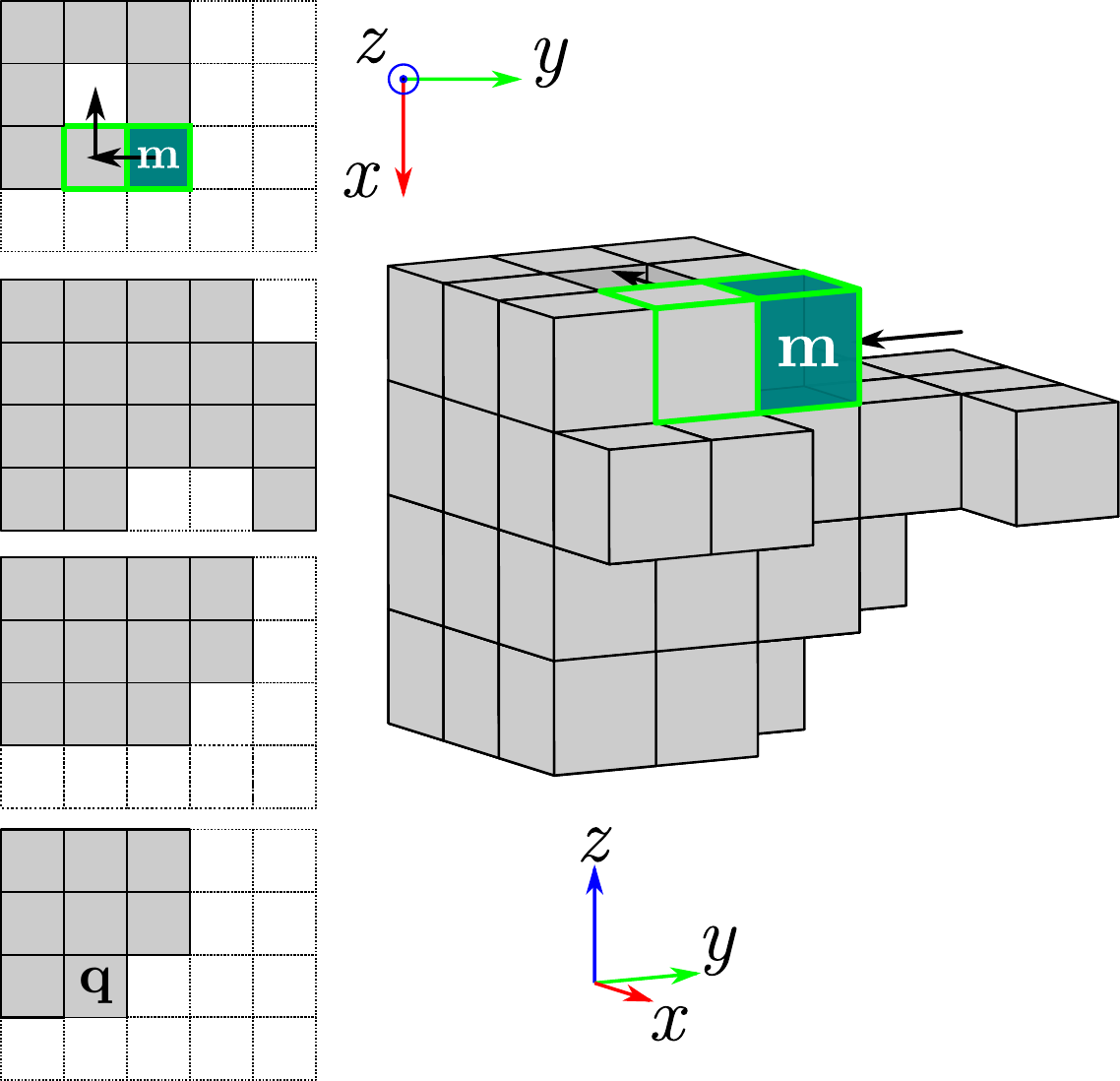}
        \caption{}
        \label{fig:fix-seq-c}
    \end{subfigure}%
    \hfill    
    \begin{subfigure}[b]{0.5\linewidth}
        \centering        \includegraphics[scale=\scl,page=4]{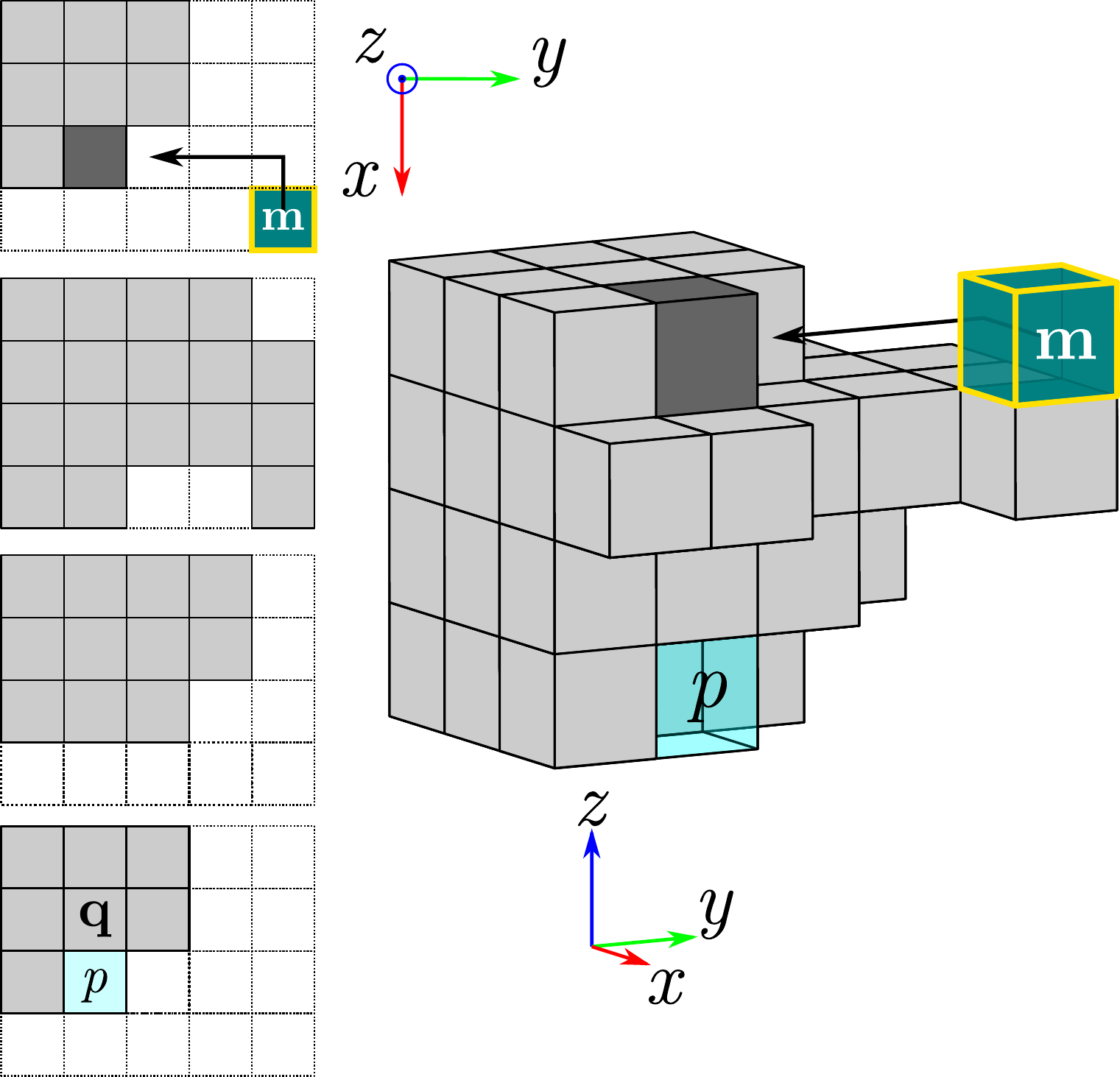}
        \caption{}
        \label{fig:fix-seq-d}
    \end{subfigure}%
        \caption{Illustration of the moves in $\Fix$ (a) line \ref{ln:fix-greedy}, (b) lines \ref{ln:moveQ}--\ref{ln:moveDown}, and (c) line \ref{ln:finalMove}. (d) After $\Fix$, $\module m$ becomes quasi-compact.
        \vspace{-.75em}}
        \label{fig:fix-seq}
    \end{figure}

\begin{lemma}
    If $\module m$ is fixable, \hyperref[alg:fix]{$\Fix(\module m)$} does not disconnect the configuration $C$, and performs at most $\|\module m-p\|_1+2$ moves.
    After $\Fix(\module m)$, if $\module m$ remained in the same $z$-layer, $\module m$ is now quasi-compact.
    \label[lemma]{lem:fix-connected}
\end{lemma}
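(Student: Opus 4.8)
The plan is to prove the three claims of \cref{lem:fix-connected}---connectivity is preserved, the move bound holds, and $\module m$ ends quasi-compact if it stays in its $z$-layer---by tracking the algorithm case by case through \hyperref[alg:fix]{$\Fix$}, relying on the definition of \emph{fixable} at each step.

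First I would handle the \textbf{move count}, since it is the most mechanical. The while loop on \cref{ln:fix-1} performs monotone steps toward the origin, each step strictly decreasing $\|\module m\|_1$; if it terminates by reaching a lower $z$-layer or becoming quasi-compact we return immediately, so in that case the number of moves is exactly $\|\module m - p\|_1$ (or less). Otherwise we enter one of the two remaining branches. In branch (ii) ($x_{\module m}=0$ or $y_{\module m}=0$), moving $\module m$ through the empty plane $x=-1$ (resp.\ $y=-1$) to a dominated empty cell $p$ is a monotone path of length $\|\module m-p\|_1$ plus the two convex transitions in and out of that plane---so at most $\|\module m-p\|_1 + 2$. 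In branch (iii), I would count: one move for $\module q\to p$ (\cref{ln:moveQ}), then $z_{\module m}-z_q$ slides to compact the column above $q$ (\cref{ln:moveDown}), then the two final slides of \cref{ln:finalMove}. Since $p+(0,-1,0)=q$ (resp.\ $(-1,0,0)$), one verifies $1+(z_{\module m}-z_q)+2 = \|\module m-p\|_1+2$, which is the claimed bound. (Here I use that $p$ is the \emph{closest} dominated empty position with $x_p = x_{\module m}-1$, so the vertical drop is exactly what the $L_1$ distance accounts for after the single $x$-step.)

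Next, \textbf{connectivity}. The key observations: (a) in the while loop, each monotone step is by hypothesis a legal slide/convex transition, and since every module of $\square(\module m)\setminus\{\module m\}$ in $\module m$'s cluster is quasi-compact, the cells $\module m$ vacates are never cut vertices---$\module m$ sits on the ``frontier'' of a quasi-compact staircase region, so removing it leaves the rest connected through the dominated occupied cells. (b) In branch (ii), $\module m$ detours through $x=-1$ (or $y=-1$); it is non-articulate there because it is a degree-$\le 1$ pendant during the detour, and the static configuration stays connected. (c) In branch (iii) I would argue that the module $\module q$ at $p+(0,-1,0)$ is movable (it is on the boundary of the quasi-compact staircase and the target $p$ is empty), that each module slid down in the column $q+(0,0,i)$ remains supported from below after its predecessor moved, and that the two final slides in \cref{ln:finalMove} each keep a connected backbone because the cell $(x_p,y_p,\cdot)$ below $p$ and the neighbors at $\module m+(-1,-1,0)$ are now occupied. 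Each individual move is checked against the single-backbone condition using fixability of the grey modules in $\square(\module m)$.

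Finally, the \textbf{quasi-compactness} conclusion. If $\module m$ stays in its $z$-layer, then we exited via branch (ii) or (iii), having moved $\module m$ to the position $p$ chosen to be dominated and empty with $x_p = x_{\module m}-1$ or $y_p=y_{\module m}-1$ (and in branch (ii), $p$ dominated on the bounding-box edge). I would argue that after the chain of moves the cell now occupied by $\module m$ has all dominated-and-not-below cells occupied: the cells in $\square(p)$ are occupied because they were occupied grey quasi-compact modules of the original staircase (the chain of moves only rearranged occupancy within the dominated region, filling $p$ and emptying the old cell of $\module m$ which is no longer dominated by the new position), and the cells above within the column are handled by the slides of \cref{ln:moveDown}. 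Hence $\module m$ is quasi-compact.

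The \textbf{main obstacle} I anticipate is branch (iii)'s connectivity argument: one must show simultaneously that $\module q$ is non-articulate when moved to $p$, that each of the $z_{\module m}-z_q$ slides in the column keeps both the moving module non-articulate and the configuration connected (the column is being ``zipped'' downward while the rest of the staircase hangs off it), and that the two crossing slides at the end do not pinch connectivity. This requires carefully using that \emph{every} module of $\square(\module m)$ in the cluster is quasi-compact---so the relevant region is a solid staircase whose interior cells are all occupied---which is exactly the force of the fixability hypothesis; without it the column could be a thin arm whose modules are cut vertices. I would organize this as: first establish that the dominated region $\square(\module m)$ minus possibly the top is fully occupied (staircase lemma from fixability), then observe every move in $\Fix$ takes place inside or adjacent to this solid region, so local support and connectivity are automatic.
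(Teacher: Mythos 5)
Your overall route is the same as the paper's: split on the three exit branches of $\Fix$, use fixability to show that the region dominated by $\module m$ is solid except for cells in the two columns $(x_{\module m}-1,y_{\module m},\cdot)$ and $(x_{\module m},y_{\module m}-1,\cdot)$, and count moves as ``monotone path plus two non-monotone moves.'' Your move accounting in all three branches matches the paper's, and your observation that in branch (iii) the net occupancy change is only at $p$ and at $\module m$'s old cell is in fact a cleaner way to get the quasi-compactness conclusion than anything the paper writes down (the paper's proof is silent on that part of the statement).

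There is, however, one step in your connectivity argument that would fail as written. Fixability only constrains $\square(\module m)$, i.e., cells in $\module m$'s own $z$-layer that it dominates; it says nothing about cells \emph{above} $\module m$. A module at $\module m+(0,0,1)$ could a priori be attached to the configuration only through $\module m$, in which case every branch of your argument breaks: the ``frontier of a quasi-compact staircase'' claim in (a), the pendant detour in (b), and the final slide in (c) that vacates $(x_p,y_p,z_{\module m})$ all move a module that would then be a cut vertex. The paper closes this hole by using that $\Fix$ is only ever called on \emph{stragglers}: by the post-order processing, every cell adjacent to and above $\module m$ has already been processed and is quasi-compact, hence is connected to the rest of the configuration through the solid dominated region and does not need $\module m$ (or $(x_p,y_p,z_{\module m})$) for connectivity. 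You need to either import that contextual hypothesis explicitly or strengthen your ``staircase lemma'' to cover the layer above $\module m$; fixability alone does not suffice.
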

\begin{proof}
Since $\module m$ is a straggler, if it is the only module moved by $\Fix(\module m)$, it is clear that connectivity is preserved. 
Lines \ref{ln:fix-first-line}--\ref{ln:move-moduleM} move $\module m$ monotonically towards the origin, except potentially for the move that brings $\module m$ outside of the bounding box (line~\ref{ln:move-moduleM}). 
The bound on the moves holds in this case, and if $\module m$ moved down (in $z$) the claim holds.

It remains to prove the claim for case (iii), i.e., lines \ref{ln:closest-p}--\ref{ln:finalMove}.
By the fact that $\module m$ is fixable, when we reach line~\ref{ln:closest-p}, cells $\module m+(-1,0,0)$, $\module m+(0,-1,0)$ and $\module m+(-1,-1,0)$ must be filled by quasi-compact modules. 
This implies that all cells with nonnegative coordinates strictly smaller than $\module m$'s are full.
Since $\module m$ is not yet quasi-compact, there must exist at least one empty position dominated by $\module m$ and not directly below it. 
Indeed, all such empty cells must be in columns $(x_{\module m} -1, y_{\module m},.)$ or $(x_{\module m}, y_{\module m} -1,.)$, otherwise $\module m+(-1,0,0)$ or $\module m+(0,-1,0)$ would not be quasi-compact.
Then, $p$ exists and $\module q$ is a compact nonarticulate module.
Thus, line~\ref{ln:moveQ} maintains connectivity.
Note that by the choice of $p$, every cell edge-adjacent in the same $z$-layer to a module moved by line~\ref{ln:moveDown} is full except for the cell in the column $x_\module m$.
Thus, the neighborhood of the moving module remains connected and line~\ref{ln:moveDown} maintains connectivity. 
When we reach line~\ref{ln:finalMove}, the cell at $(x_p,y_p,z_{\module{m}}-1)$ is full (by the choice of $p$).
The fact that $\module m$ is a straggler implies that every edge-adjacent cell above  $\module m$ have been already processed by $\Fix$, so they are quasi-compact and thus do not require $\module m$ or $(x_p,y_p,z_{\module{m}})$ for connectivity.
Then, both moves in line~\ref{ln:finalMove} maintain connectivity.
Except for line~\ref{ln:moveQ}  and the move from $(x_p,y_p,z_{\module{m}})$ to $\module m+(-1,-1,0)$ (line~\ref{ln:finalMove}), all the moves are monotonic in the direction $p-\module m$. Thus the total number of moves is as claimed.
\vspace{-1em}
\end{proof}

\subsection{\textsc{Compactify} for planar configurations}
\label{sec:2d}
With a few small adjustments, our algorithm can be applied to planar configurations (when all modules lie in the $z=0$ plane). 
Note that we allow using planes $z=1$ and $z=-1$ to move musketeers and stragglers.
The clusters are now defined by the $y$-slice graph (maximal components induced by a fixed $y$ coordinate). Lowering modules follows the same principle, except that we lower the $y$-coordinates of the modules (cases are simpler, as modules have degree 2 or less within the cluster).
$\Fix$ also becomes substantially simpler as stragglers can simply move to the closest dominated empty cell through the $z=-1$ plane.

\both{\section{Analysis of \textsc{Compactify}}}

\both{\begin{lemma}
    \hyperref[alg:Comp]{\textsc{Compactify}$(C)$} transforms a configuration $C$ into a compact configuration maintaining connectivity.
\end{lemma}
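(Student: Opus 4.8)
The plan is to prove correctness of \textsc{Compactify} by tracking two things throughout its execution: that every move is legal (nonarticulate and satisfies free-space requirements, i.e.\ connectivity is never broken), and that a suitable potential function strictly decreases so the algorithm terminates in a compact configuration. For the connectivity part I would rely on the lemmas already established: \cref{lem:buildMusketeers} places the six musketeers on the outer boundary without breaking connectivity; \cref{lem:melt} guarantees that the post-order lowering of an extremal cluster $S$ maintains connectivity (the invariant that every processed-but-not-yet-reparented module has an upper neighbor, plus the musketeer ``bridge'' above the module currently processed, does the work); and \cref{lem:fix-connected} handles the straggler repair. So the real content is the termination/progress argument for the outer \textbf{while} loops of Alg.~\ref{alg:Comp}.

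First I would argue each iteration of the first \textbf{while} loop (lines 2--9) makes progress. Let $S$ be the extremal cluster selected by the DFS. By definition of \emph{extremal}, all clusters above $S$ (if any) are already quasi-compact, so we never ``undo'' work done higher up. After \hyperref[alg:Comp]{line~5} (\cref{lem:melt}) every non-quasi-compact module of $S$ that had no lower neighbor has moved down one layer, merging with a cluster below; the ones that did have a lower neighbor are stragglers, and the \textbf{for} loop (lines 6--8) calls $\Fix$ on each, in increasing order of $\|\cdot\|_1$. The key sub-claim is that when $\Fix(\module m)$ is invoked, $\module m$ is \emph{fixable} in the sense of \cref{lem:fix-connected}: because stragglers are processed by increasing $L_1$-norm and $\Fix$ only moves modules toward the origin (equivalently, to cells dominated by $\module m$), every module in $\square(\module m)\setminus\{\module m\}$ in $\module m$'s cluster is already quasi-compact by the time we reach it. Hence \cref{lem:fix-connected} applies and, after the \textbf{for} loop, every module formerly in $S$ either sits in a strictly lower $z$-layer or is quasi-compact. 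I'd then define the potential $\Phi(C) = \sum_{\module m \in C}\|\module m\|_1$ (or, more robustly, the multiset of $z$-coordinates ordered lexicographically, ignoring slice $z=0$): each iteration strictly decreases it, and it is bounded below, so the loop terminates; at termination $C$ is quasi-compact above slice $z=0$, which is exactly the loop guard.

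Next, the second \textbf{while} loop (lines 10--12) slides down any module with an empty cell directly beneath it and $z>0$; each such slide is legal (the module is nonarticulate since everything above is quasi-compact, and there is a co-planar pair of occupied cells to slide along) and strictly decreases $\sum z_{\module m}$, so it terminates with all modules either in slice $z=0$ or compact; combined with quasi-compactness this means every module with $z>0$ is compact, and a standard downward-induction on $z$ shows all of slice $z=0$'s columns are full up to the occupied height — i.e.\ the configuration restricted to $z\ge 1$ is compact. Finally, if the $z=0$ slice is not itself compact we invoke the planar version of \textsc{Compactify} (\cref{sec:2d}), which by the same argument in one lower dimension compacts slice $z=0$ using the $z=\pm 1$ planes as scratch space, after which the whole configuration is compact.

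The main obstacle I expect is nailing down the \emph{fixable} invariant rigorously: one must verify that no call to $\Fix$ — including its internal chain of slides through the $x=-1$ or $y=-1$ plane and its column-collapsing moves (lines in Alg.~\ref{alg:fix}) — can ruin quasi-compactness of a module that a \emph{later} straggler will need as a Steiner/anchor vertex, and that the ordering by $\|\cdot\|_1$ genuinely suffices for this (ties in $L_1$-norm within a layer need a moment's thought, since two stragglers at the same height could in principle interfere). A secondary subtlety is confirming that moving musketeers between clusters in \hyperref[alg:Comp]{line~4}, and the musketeer reshuffling inside \cref{lem:melt}, never leaves a musketeer ``stranded'' in a way that blocks a future free-space requirement — but since musketeers always live on the outer boundary and \cref{sec:ob-musket} re-establishes a canonical formation, I'd fold this into a short invariant statement and lean on \cref{lem:buildMusketeers,lem:melt} for the details.
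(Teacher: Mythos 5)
Your proposal is correct and follows essentially the same route as the paper's proof: connectivity is delegated to Lemmas~\ref{lem:melt} and~\ref{lem:fix-connected}, the key point is that processing stragglers in increasing $L_1$-order guarantees each is fixable when $\Fix$ is called (every module of $\square(\module m)\setminus\{\module m\}$ has strictly smaller $L_1$-norm, so ties are harmless), the final slides are legal because the configuration is then quasi-compact, and the $z=0$ slice is handled by the planar version. The explicit potential-function termination argument you add is left implicit in the paper but is a harmless strengthening.
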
}
\ifabstract
\textbf{Proof sketch.} The claim is mostly established by \Cref{lem:melt,lem:fix-connected}.
We process the stragglers in increasing order of $L_1$ norm, guaranteeing that the first straggler is fixable. By \Cref{lem:fix-connected}, the straggler either moves down or becomes quasi-compact, making the subsequent processed straggler fixable.
\markatright{\QED}
\fi
\later{\begin{proof}
    The moves in line~\ref{ln:lower}  maintain connectivity due to Lemma~\ref{lem:melt}.
    By the choice of $\module{m}$ in line~\ref{ln:straggler-order} , the first time that $\Fix$ is run, $\module{m}$ is fixable.
    By lemma~\ref{lem:fix-connected} and the choice of $\module{m}$ in line~\ref{ln:straggler-order}, $\module{m}$ is fixable in every subsequent call of $\Fix$. 
    Thus, Lemma~\ref{lem:fix-connected} guarantees connectivity in line~\ref{ln:cmptx-fixCall}.
    When we reach line~\ref{ln:compacting-quasi}, $C$ is quasi-compact.
    Then, every module is either compact or it must be in the outer boundary of $C$.
    Thus, for a noncompact module $\module m$ in line~\ref{ln:slide-m} $\module{m}+(1,0,0)$ and $\module{m}+(0,1,0)$  are empty and its other neighbors are quasi-compact and do not need $\module m$ for connectivity. 
\end{proof}}


\both{\begin{lemma}
\label{lem:2comp}
    A sequence of $17\sum_{\module m\in C} \|\module m\|_1 + 60n + O(1)$ moves suffices to transform a configuration $C$ with $n$ modules into a compact one. 
\end{lemma}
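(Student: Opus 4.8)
The plan is to bound the total number of moves over the entire execution of \textsc{Compactify} by charging moves to the decrease in the potential $\Phi(C) = \sum_{\module m \in C} \|\module m\|_1$, plus an $O(n)$ term per "phase" that handles the musketeer overhead and the constant-shift moves that don't themselves decrease $\Phi$. First I would recall the structure: after the $\LocateAndFree$ preprocessing (which costs $\le 24n + O(1)$ moves for six musketeers by \cref{lem:buildMusketeers}), the main loop repeatedly picks an extremal cluster $S$, calls the lowering routine (\cref{lem:melt}), and then fixes the resulting stragglers (\cref{lem:fix-connected}); finally there is the quasi-compact-to-compact cleanup and the recursive planar call.

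The key accounting step is to show that each module is "lowered" — i.e., has its $z$-coordinate decreased by one via the four-move gadget of \cref{fig:lower-straight,fig:lower-bend} — at most $\|\module m\|_1$ times total over the whole algorithm, and similarly is the base of a straggler fix at most... no: more precisely, I would observe that every move performed inside $\Lower$ or $\Fix$ (other than $O(1)$ bookkeeping moves per cluster and per straggler) is charged to a distinct unit decrease of $\Phi$. By \cref{lem:melt}, processing an extremal cluster with $\ell$ lowered modules and $s$ stragglers uses $17\ell + 13 s$ moves; the $\ell$ lowered modules each drop their $z$ by $1$, contributing a decrease of $\ell$ to $\Phi$. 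By \cref{lem:fix-connected}, $\Fix(\module m)$ uses at most $\|\module m - p\|_1 + 2$ moves where $p$ is the target cell dominated by $\module m$; since $p$ is dominated by $\module m$, moving the "mass" from $\module m$ to $p$ decreases $\Phi$ by exactly $\|\module m - p\|_1$ (the chain of moves is monotone toward the origin except for $O(1)$ exceptions already absorbed into the $+2$). Summing the $+2$ over all stragglers and the $O(1)$ musketeer-repositioning moves over all clusters gives $O(n)$, since the number of stragglers and the number of extremal clusters processed are each $O(n)$ — indeed each straggler fix either decreases someone's $z$ or makes a module quasi-compact, and a module becomes quasi-compact at most... here one must argue the loop terminates with a linear number of iterations, which follows because $\Phi$ is a nonnegative integer that strictly decreases each iteration by at least one (an extremal cluster always has a non-quasi-compact module, so $\ell \ge 1$ or a straggler strictly descends).

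Putting the constants together: $17 \sum \|\module m\|_1$ covers all $\Lower$ moves and the per-module part of $\Fix$; the musketeer preprocessing is $\le 24n$; the $+2$ per straggler and the six-moves-per-cluster advances (\cref{fig:lower-advance}) and the final slide-down loop (line~\ref{ln:slide-m}, one move per module, $\le n$) and the planar recursion (which by \cref{sec:2d} costs $O(\sum \|\module m\|_1)$ with smaller constants, absorbed into the $17\sum\|\module m\|_1$ and $O(n)$ already) together are $\le 60n + O(1)$ after summing. The main obstacle I expect is the careful bookkeeping to ensure the "$+2$ per straggler" and "$O(1)$ per cluster" overheads really do sum to only $60n + O(1)$ rather than something depending on $\Phi$ — this requires showing that the total number of $\Fix$ calls across all iterations is $O(n)$ (each straggler is processed once and a module becomes a straggler at most $O(\|\module m\|_1)$ times, so one must instead charge straggler-creations to $z$-decreases, not count them crudely) and that the number of extremal-cluster iterations is $O(n)$. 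Both follow from the strict monotone decrease of the integer potential $\Phi$ bounded initially by $\sum \|\module m\|_1$, but stated naively this would give an $O(\sum\|\module m\|_1)$ overhead, not $O(n)$; the resolution is that the per-cluster and per-straggler overheads are *already* paid for by the decrease they cause, and only a genuinely $O(n)$-sized residue (the preprocessing, the final single-move-per-module slide, and the base cases) remains unabsorbed.
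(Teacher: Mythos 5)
Your high-level framework (charge $17$ moves per unit decrease of the potential $\Phi=\sum_{\module m\in C}\|\module m\|_1$, plus an $O(n)$ residue for preprocessing and cleanup) matches the paper's, and your use of \cref{lem:melt} and \cref{lem:fix-connected} is the right starting point. However, there are two genuine gaps, and they are exactly where the paper spends most of its effort. First, you dismiss the musketeer repositioning between clusters (line~\ref{ln:muskToCluster}) as ``$O(1)$ musketeer-repositioning moves over all clusters.'' This is not true: moving six musketeers from the cluster they currently occupy to the next extremal cluster requires walking along the surface of the configuration, which can cost $\Theta(\text{cluster size})$ per traversal, and the number of iterations of the main loop is not $O(1)$. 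The paper's $60n$ term is dominated by precisely this cost ($36n$ of it), and bounding it requires a nontrivial amortization: each cluster is traversed \emph{upwards} at most once (because lowering a cluster merges it with everything below, so the musketeers never return beneath a cluster they have climbed past until it is processed), which pays for bottom/side faces at $6n+O(1)$; and top faces are visited at most twice per module, argued via an Euler tour of a growing Steiner tree of ``root'' modules, giving $24n$. Without some argument of this kind your $O(n)$ residue is unsubstantiated.

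Second, the $13s$ term of \cref{lem:melt} counts \emph{all} modules of the Steiner tree $T$, including the quasi-compact modules in $T\setminus S^*$ that the musketeers must walk through but that never move and hence never decrease $\Phi$. Your charging scheme assigns every $\Lower$/$\Fix$ move to a distinct unit decrease of $\Phi$, but these modules produce moves with no corresponding decrease, and $|T|$ can exceed $|S^*|$ by more than a constant. The paper closes this by showing that either such modules still ``hold potential'' (their coordinates were never charged), or the components of $S^*$ descended from a single connected cluster above, which must have contained a connecting path $p$ with $|p|\ge|T\setminus S^*|$ whose modules became quasi-compact at height $z_S$ and therefore retain at least $z_S$ units of uncharged potential each; the musketeer traffic over $T\setminus S^*$ is charged to that reserve. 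You would need to supply both of these arguments (or equivalents) before the claimed constants $17\sum\|\module m\|_1 + 60n$ can be justified.
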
}
\ifabstract
\textbf{Proof sketch.}
    By Lemma~\ref{lem:melt}, we can charge 17 moves to the decrease of the $z$-coordinate of the lowered modules.
    By Lemma 2~\ref{lem:buildMusketeers} we can obtain six musketeers in $24n + O(1)$ moves. 
    While traversing $C$, every musketeer walks on the bottom face of each module of a cluster at most once, and walks on the side face of one module exactly once (see \cref{fig:upwards-traversal} for example). 
    This amounts $6n + O(1)$ musketeer movements (each cluster is only traversed in this way once). 
    In the planar version of \textsc{Compactify} the musketeers can move in the positive $y$-direction using the bottom faces, which adds $6n$ moves.
    The total amounts $12n + O(1)$ musketeer movements.
     The musketeers may walk on the top face of a module at most two times, $24n$ moves.
     Overall, the musketeers preform $24n + 12n + O(1) = 36n + O(1)$ moves. Combined with the bound from Lemma~\ref{lem:buildMusketeers}, we get a total $24n + 36n + O(1) = 60n + O(1)$ moves. \markatright{\QED}
\fi

\later{
\begin{figure}[t]
    \centering
    \begin{subfigure}[b]{0.5\linewidth}
        \centering        
        \includegraphics[width=\linewidth,page=1]{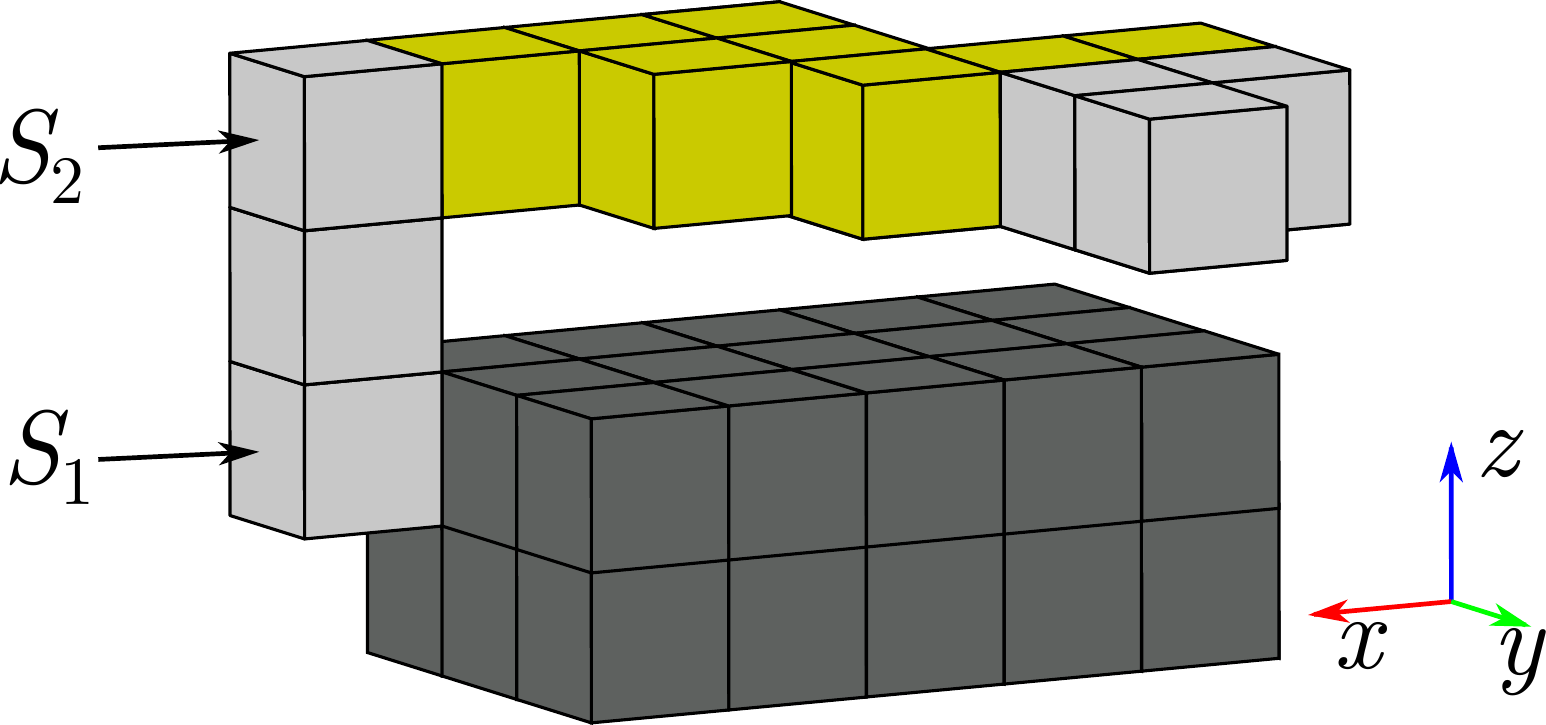}
        \caption{}
        \label{fig:cluster-charging-a}
    \end{subfigure}%
    \\
    \begin{subfigure}[b]{0.45\linewidth}
        \centering        
        \includegraphics[width=\linewidth,page=2]{Figures/cluster-charging.pdf}
        \caption{}
        \label{fig:cluster-charging-b}
    \end{subfigure}%
    \hfill
    \begin{subfigure}[b]{0.35\linewidth}
        \centering        
        \includegraphics[width=\linewidth,page=3]{Figures/cluster-charging.pdf}
        \caption{}
        \label{fig:cluster-charging-c}
    \end{subfigure}%
        \caption{(a) A configuration with quasi-compact modules shown in dark gray. Cluster $S_2$ has modules shown in yellow that become compact stragglers.
        (b--c) $S_2$ becomes $S$ after 2 iterations of the main while loop (line~\ref{ln:comp-while}) of \textsc{Compactify}. (c) The movement of one musketeer is shown by the red path.}
        \label{fig:cluster-charging}
    \end{figure}
}
    \begin{figure}
        \centering
        \includegraphics[scale=.75]{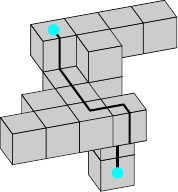}
        \caption{The path of an upwards traversal of musketeers from one cluster to another. Their start and end positions marked by blue circles.}
        \label{fig:upwards-traversal}
    \end{figure}
\later{
\begin{proof}

    By Lemma~\ref{lem:melt}, we can charge 17 moves to the decrease of the $z$-coordinate of the lowered modules.
    By Lemma~\ref{lem:melt} and \ref{lem:fix-connected},  the non-quasi-compact stragglers (in $S^*$) reduce their coordinates by at least one using a minimum of $16$ moves (13 from Lemma~\ref{lem:melt} and at least 3 from Lemma~\ref{lem:fix-connected}).
    However, stragglers that are already quasi-compact do not move, but the musketeers might move through them. 
    Refer to Fig.~\ref{fig:cluster-charging}.
    Let $S_2$ be the uppermost cluster shown in \cref{fig:cluster-charging-a} that becomes a cluster $S$ after 2 iterations of the main while loop of \textsc{Compactify}.
    Recall that, when processing $S$, $T$ is a minimum Steiner tree of the modules in $S^*$ (shown in light gray in \cref{fig:cluster-charging-b}).
    If $T$ is substantially larger than $S^*$, this would lead to more moves than we can afford if we simply count the decrease in coordinates of modules in $S^*$. (It takes $O(|T|)$ to process $S$, but the sum of coordinates only change by $O(|S*|)$.)
    Let $z_S$ be the $z$-coordinate of $S$.
    Recall that $T\setminus S^*$ contains only quasi-compact modules.
    Such modules are already in their final position and therefore they won't move, but they still ``hold potential'' (their sum of coordinates have not yet been charged to moves). 
    If such modules have not already been charged, then we can charge the movement of the musketeers through them to their sum of coordinates.
    Else, they have already been charged and thus were previously part of  another cluster $S_1$ previously in the slice $z_S$, into which another cluster $S_2$ merged after it was lowered by a previous iteration of \textsc{Compactify}, resulting in cluster $S$.
    (The dark gray modules that the musketeer walks on in \cref{fig:cluster-charging-c} were already quasi-compact in cluster $S_1$ shown in \cref{fig:cluster-charging-a})
    After that execution of line~\ref{ln:lower}, and subsequent executions of \textsc{Fix}, every module that remain in the processed slice is quasi-compact (becomes dark gray). 
    Because $S$ has dark gray modules that were already charged, all light gray modules ($S^*$) must come from a cluster previously located above $z_S$ that just got lowered by \textsc{Compactify}.
    Then, $S_2$ originally contained the modules in $S^*$, and not the modules in $T\setminus S^*$.    
    Refer to Figs.~\ref{fig:cluster-charging}(a--b).
    Since $S_2$ is connected, it must contain a path $p$ of modules connecting the different components of $S^*$ so that the modules in $p$ became semi-compact ($p$ must be a subset of the yellow modules that have spare potential). 
    By definition of $T$, $|p|\ge |T\setminus S^*|$.
    These modules store at least $z_S$ potential that can be used to pay for the movement of musketeers between the components of $S^*$ in this and all subsequent executions of line~\ref{ln:lower} involving such modules.
    Thus, we can charge at most  13 moves for each unit of $z$-coordinate of a module that becomes quasi-compact.
    Note that the planar version has strictly better constants for each unit of $y$-coordinate lowered.
    Overall, the bottleneck is 17 moves per lowered coordinate.

    By Lemma~\ref{lem:buildMusketeers} we can obtain our six musketeers in $6(4n) + O(1) = 24n + O(1)$ many moves. 
    We now only have to bound the number of moves conducted when moving musketeers between clusters (line~\ref{ln:muskToCluster} of \textsc{Compactify}).
    We claim this bound is $18n + O(1)$.
    For each module $\module m$ in a cluster $R$, we first bound the number of times the musketeers might visit the bottom face of $\module m$, then the side face of $\module m$, and finally the top faces of $\module m$.
    
    If the musketeers are sitting on $R$, and $R$ is not an extremal cluster, then the musketeers move to an extremal cluster $S$. 
    $S$ must be in a strictly greater $z$-layer than $R$, and further, the musketeers move to $S$ by walking only along clusters in $z$-layers strictly above $R$.
    Call this traversal from $R$ to $S$, an \textit{upwards traversal}, and correspondingly cluster visited was \textit{traversed upwards}.
    
    We argue a cluster may only be 
    traversed upwards once. 
    When a cluster is lowered it merges with every cluster in the layer below that had a module adjacent to it. 
    Therefore, once the musketeers traverse to a cluster $S$ above $R$, the musketeers will never reach a position in a $z$-layer below $R$, until $R$ has been lowered by line~\ref{ln:lower} of \textsc{Compactify}.  

    During an upwards traversal, in the worst case every musketeer walks on the bottom face of each module of a cluster exactly once, and walks on the side face of one module exactly once (see \cref{fig:upwards-traversal} for example).
    For each cluster, only a single side face is walked on and, if the musketeers are traversing the cluster upwards, at least one of the bottom faces of the cluster is not accessible (it connects to a cluster below it).
    Thus, among bottom and side faces, on average one face per module  is visited by the musketeers.    
    This amounts $6n + O(1)$ musketeer movements. 
    In the planar version of \textsc{Compactify} the musketeers can move in the positive $y$-direction using the bottom faces, which adds $6$ more moves per visited module.
    The total amounts $12n + O(1)$ musketeer movements.

    The musketeers may walk on the top face of a module at most two times.
     For any nonextremal cluster $R$, it has some modules with a neighbor above them. Consider a Steiner tree $T$, along the top faces of the modules of $R$, connecting these ``root'' modules. 
     When the musketeers are preforming upwards traversals. They walk along the top face of the modules of $R$.
     We charge musketeer movements to an Euler tour of $T$. 
     When a cluster above $R$, say $R^\uparrow$ is lowered and merged with $R$, the musketeers are placed on top of some cell of $R^\uparrow$ we can grow $T$ from $R$ to include the musketeers as a Steiner vertex, only using modules of $R^\uparrow$ (\cref{fig:musk-charging}). Therefore the musketeers walk along the top face of a module at most two times (each vertex of $T$ is visited at most two times). 
     The same is true for the planar version of \textsc{Compactify}. 
     Thus the bound on this type of musketeer moves is $24n$.

     Overall, the musketeers preform $24n + 12n + O(1) = 36n + O(1)$ moves. Combining this bound with the one for obtaining musketeer modules, we get a total of $24n + 36n + O(1) = 60n + O(1)$ moves.
\end{proof}
}
\later{
\begin{figure}[ht]
    \centering
    \includegraphics[width=0.75\linewidth]{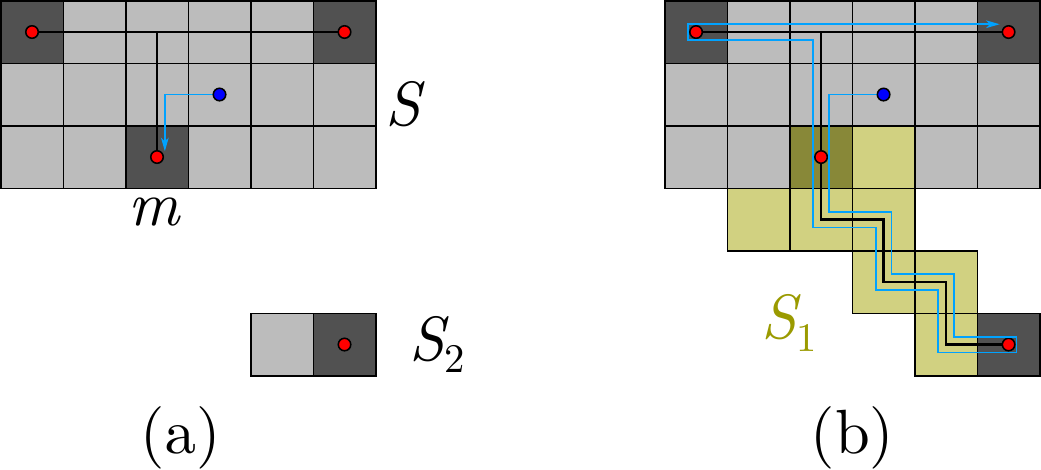}
    \caption{(a) A cluster $S$ and a Steiner tree $T'$ connecting modules adjacent to higher clusters. Another cluster $S_2$ is also in the same slice.
    The initial position of a musketeer is shown in blue.
    (b) During $\textsc{Compactify}$, $S$ merges with $S_1$ and $S_2$. The Steiner tree is updated using only modules not originally in $S$. The musketeer movements can be charged to a traversal in such a tree.}
    \label{fig:musk-charging}
\end{figure}
}

\both{
\begin{lemma}
\label{lem:runtime}
	$\textsc{Compactify}$ has runtime that is within an amortized $O(1)$ factor of the moves performed.
\end{lemma}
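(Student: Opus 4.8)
The plan is to show that the total bookkeeping work done by \textsc{Compactify} is asymptotically dominated by the number of sliding moves produced. I would go through the algorithm line by line, identifying each non-move operation and charging its cost either to a move that is performed nearby or to a one-time global preprocessing cost that is $O(n)$ (hence absorbed by the $\Omega(n)$ moves the algorithm always makes). The key structural fact I would lean on is that, between consecutive moves, the algorithm only ever needs to inspect a bounded neighborhood of the module being moved, plus occasional amortized-cheap global updates.

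First I would handle the preprocessing: computing the slice graph and running $\LocateAndFree$ to obtain the six musketeers takes $O(n)$ time in total, which is within $O(1)$ amortized per move since the algorithm performs $\Omega(n)$ moves overall. Next, the main \textbf{while} loop (line~\ref{ln:comp-while}): the DFS on the slice graph to find the next extremal cluster $S$ must be argued to cost $O(1)$ amortized — here the point is that the DFS preferring upward edges never re-traverses a cluster that has already been processed or passed through (by the ``upwards traversal'' argument already established in the proof of \cref{lem:2comp}, a cluster is traversed upwards at most once), so the total DFS work over the whole execution is $O(n)$ plus the musketeer-moving cost, and the latter is a true move count. Moving musketeers to $S$ (line~\ref{ln:muskToCluster}) and lowering (line~\ref{ln:lower}) each produce the moves they cost, with only a constant amount of local lookup per move to decide which of the cases in \cref{fig:lower-straight,fig:lower-bend} applies; the post-order traversal of the Steiner tree $T$ visits each edge $O(1)$ times, and building $T$ itself is a local computation within the cluster proportional to $|S|$, chargeable to the lowering moves of that cluster.

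Then I would address \textsc{Fix} (line~\ref{ln:cmptx-fixCall}) and the straggler ordering (line~\ref{ln:straggler-order}). Sorting the stragglers by $\|\module m\|_1$ in each iteration is the most delicate point for the runtime claim: a naive sort costs $\Theta(s \log s)$ which is not obviously $O(1)$ amortized per move. I would argue this away by maintaining the stragglers in a bucket/priority structure keyed by $L_1$ norm (the norms are integers bounded by the bounding-box diameter), or by observing that each straggler participates in \textsc{Fix} exactly once and \textsc{Fix}$(\module m)$ performs $\|\module m-p\|_1+2$ moves (\cref{lem:fix-connected}), so even a linear scan to extract the minimum is chargeable to those moves as long as the number of stragglers is $O(n)$ across the whole run — which it is, since each module becomes a straggler at most a bounded number of times (each straggler is resolved, dropping its $z$- or becoming quasi-compact, by \cref{lem:fix-connected}). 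Inside \textsc{Fix}, searching for the empty cell $p$ is a monotone walk along at most two columns, and each cell inspected corresponds to a move, so it is $O(1)$ amortized. The final \textbf{while} loop (lines~\ref{ln:compacting-quasi}–\ref{ln:slide-m}) and the recursive planar call are handled identically: each down-slide is a move, and the search for a module with an empty cell below can be driven off a worklist updated locally after each move.

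The main obstacle I expect is precisely the straggler sort and, more generally, making rigorous that every ``search'' the algorithm performs either touches only cells adjacent to an actual move or can be amortized against a one-time $O(n)$ budget; the rest is routine. A clean way to package this is a global accounting argument: maintain a potential equal to (constant times) the number of moves performed so far, show each line of Algorithms~\ref{alg:Comp} and~\ref{alg:fix} either increases this potential or does work bounded by a constant times the potential increase it triggers, and conclude that total runtime is $O(\text{total moves})$, which with the move bound of \cref{lem:2comp} gives the stated amortized $O(1)$.
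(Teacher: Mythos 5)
Your overall charging strategy matches the paper's: preprocessing and slice-graph work absorbed into an $O(n)$ budget, musketeer path computation charged to path length (which is a move count), the post-order lowering schedule costing $O(1)$ lookup per move, and the search inside $\Fix$ charged to the $\|\module m-p\|_1+2$ moves it triggers. Your treatment of the straggler ordering (bucketing by integer $L_1$ norm, or charging a linear extraction to the moves of the ensuing $\Fix$ call) is a point the paper does not spell out, and it is a legitimate addition.

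However, you are missing the one concrete mechanism the paper's proof actually hinges on: each module stores three monotone boolean flags --- whether it is compact, whether it is quasi-compact, and whether $\square(\module m)$ is entirely filled. These properties are ``transitive'' (decidable from the neighbors with smaller coordinate sum), and since each flag can flip from false to true only once per module, all updates over the whole execution cost $O(n)$, amortizable against the $\Omega(n)$ moves. Without these flags several tests in the algorithm are not chargeable to moves at all, because they may return a negative answer and trigger \emph{no} move: the guard of the main while loop (``$C$ is not quasi-compact''), the early exits in $\Fix$ (``$\module m$ \ldots became quasi-compact''), and the branch guard ``$\module m$ is not compact'' in case (ii). In case (ii) the flags are also what bounds the search for $p$ by $\|\module m-p\|_1$ queries: probing down the column below $\module m$, the first \emph{compact} occupied cell certifies that everything beneath it is full, so the search can turn sideways at the correct layer instead of wasting up to $z_{\module m}$ probes when $p$ is actually nearby in the $-y$ direction. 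Your phrase ``occasional amortized-cheap global updates'' gestures at this but does not supply it; a naive recomputation of (quasi-)compactness at each loop-guard evaluation can cost $\Theta(n)$ per iteration over $\Theta(n)$ iterations, i.e.\ $\Theta(n^2)$, which exceeds the move bound when the configuration is close to compact (moves can be as few as $\Theta(n^{4/3})$). Adding the flag data structure, with the observation that each flag is set at most once, closes the gap and brings your argument in line with the paper's.
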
}

\begin{proof}
Whenever a musketeer moves, it is either following a constant size schedule of moves such as \cref{fig:lower-straight}, or is traveling along a shortest path from one cluster to another. The former case clearly only requires $O(1)$ computation, and in the latter, computing the path requires time proportional to the length of the path, hence amortized $O(1)$ computation per movement.
The only section of this algorithm where an $O(1)$ computation cost is not simple is $\Fix$, when we must locate an appropriate empty position $p$. 
We show if each module stores a constant amount of information, these positions can be located efficiently.
Assume that initially each module $\module m$ has knowledge of the following three properties: 1) Is $\module m$ compact? 2) Is $\module m$ quasi-compact? 3)	 Is every position of $\square(\module m)$ filled? 

Note, each of these are transitive, and, to answer these questions, $\module m$ need only query its neighbors with smaller sum of coordinates. 
Consequently, if $\module m$ moves, recomputing if $\square(\module m)$ is full takes only $O(1)$ time.
If $\module m$ becomes (quasi-)compact it may be the case that many modules also become (quasi-)compact. 
While this may affect many more modules than the number of moves performed, a module can become (quasi-)compact only once. 
Therefore, in aggregate these updates will only take $O(n)$ time, which can then be amortized over the number of moves to $O(1)$ time.

If $x_\module m = 0$ or $y_\module m = 0$ (case(ii)), $\Fix$ attempts to move it to an empty dominated position $p$. If $\module m$ is not compact, then $p$ exists. Therefore we can determine if $p$ exists in $O(1)$ time. If $\module m$ is not compact, we can locate $p$ with $||\module m - p||_1$ queries: 
Checking positions directly below $\module m$ (from $\module m$) we will either find an empty position (and we are done) or a compact module by the fact that $\module m$ is fixable. 
We have found the $z$ coordinate of $p$ and we can find it by looking in the $-x$ or $-y$ direction (note that $\module m$ and $p$ lie in the same plane $x_\module m = 0$ or $y_\module m = 0$).

In case (iii), since we know $p$ is either in column $(x_{\module m} -1, y_{\module m},.)$ or $(x_{\module m}, y_{\module m} -1,.)$, we can find it with $||\module m - p||_1$ queries with a linear search alternating between the two columns.
\end{proof}

\later{\begin{proof}
Although the posed algorithm requires some complex operations, a relatively simple implementation requires $O(1)$ amortized computation per module movement.
Note, for most movements, this $O(1)$ bound is relatively trivial. 
A naive implementation for computing musketeer movements leads to an $O(1)$ computational overhead. 
By Lemma~\ref{lem:melt} lowering a module during $\textsc{Compactify}$, takes at most 17 movements. When musketeers reach a high degree vertex and must move to a leaf, we preform a DFS and then the musketeers move a number of modules proportional the length of this search. 
After we have lowered a slice and the musketeer modules must seek out a new extremal slice, again, a DFS is preformed and the musketeers preform a number of movements proportional to this search.

Truly, the only section of this algorithm where an $O(1)$ computation cost is not immediate is $\Fix$, when we must locate an appropriate empty position for $\Tunnel$. 
However, if we assume each module stores a constant amount of information, these positions can be located efficiently.
Assume that initially each module $\module m$ has knowledge of the following three properties: 1) Is $\module m$ compact? 2) Is $\module m$ quasi-compact? 3)	 Is every position of $\square(\module m)$ filled? 

Note, each of these are transitive, and, to answer these questions, $\module m$ need only query its neighbors with smaller sum of coordinates. For the first and second, if all such neighbors are compact or quasi-compact, then so is $\module m$. For the third property $\module m$ only has to communicate with its two neighbors in $\square(\module m)$, if both return true then $\square(\module m)$ is full as well. 

Consequently, if $\module m$ moves, recomputing if $\square(\module m)$ is full takes only $O(1)$ time.
However, if after a movement, if $\module m$ becomes compact it may be the case that many modules also become compact. 
When $\module m$ becomes compact, it must be the case that $\module m - (0,0,1)$, $\module m - (0,1,0)$, and $\module m - (1,0,0)$ are compact. The three other neighboring cells $\module m + (0,0,1)$, $\module m + (0,1,0)$, and $\module m + (1,0,0)$, may now be compact. If any are compact, say $\module m + (0,0,1)$, we check the same three relative positions $\module m + (0,0,2)$, $\module m + (0,1,1)$, and $\module m + (1,0,1)$ and continue this search. If they are empty or not compact, we stop checking neighbors. 
Although this search may potentially take much longer than the number of moves performed, a module can become compact only once. 
Every argument about the compactness of $\module m$ translates directly to the quasi-compactness of $\module m$.
Therefore, in aggregate these updates will only take $O(n)$ time, which can then be amortized over the number of moves to $O(1)$ time. 
Further
the transitivity of these properties means for a given configuration, precomputing these values can be accomplished by a simple BFS ($O(n)$ time).

Now, assuming each module has this knowledge, we can efficiently find our empty position $p$ for $\Fix$. 
During a given call of $\Fix(\module m)$.  
We first move $\module m$ monotonically to positions with smaller sum of coordinates. For each move a query to each neighboring cell with smaller coordinate sums is sufficent. So each of these moves only requires an $O(1)$ amount of runtime.

Then if $x_\module m = 0$ or $y_\module y = 0$ $\Fix$ attempts to move it to an empty dominated position $p$. If $\module m$ is not compact, then $p$ exists. Therefore we can determine if $p$ exists in $O(1)$ time. If $\module m$ is not compact, we can locate $p$ with $||\module m - p||_1$ queries. 
Without loss of generality, assume $x_\module m = 0$. 
We start by searching each position directly below $\module m$, $\module m - (0,0,i)$ for $i = 1$ to $i = z_\module m$. We stop this search when we find one of three things happen.

\begin{description}
    \item[$\module m - (0,0,i)$ is empty:] In this case we have found $p$.
    
    \item[$\module m - (0,0,i)$ is compact:] Then every position dominated by $\module m - (0,0,i)$ is full. However, since $\module m$ is not compact, the compactness of $\module m - (0,0,i)$ implies an empty position $p$ dominated by $\module m$ is present in the $z$ layer above $\module m - (0,0,i)$ (i.e. with $z$-coordinate $z_\module m - (i - 1)$). We can find $p$ by querying positions $\module m - (0,j,i - 1)$ for $j = 1$ to $j = y_\module m$, until one is empty.
    
    \item[$i = z_\module m$ and $\module m - (0,0,z_\module m)$ is not compact:] If $\module m - (0,0,z_\module m)$ is not compact, there must an empty cell with smaller $y$ coordiante. We preform a search similar to the last case to find it, and this becomes our $p$.
\end{description}
In each of these cases, we query a number of cells exactly equal to the manhattan distance between $\module m$ and $p$.

Finally, if $x_\module m$ and $y_\module m$ are both $>0$, $\Fix$ again searches for an empty position $p$ dominated by $\module m$. As noted in the proof of Lemma~\ref{lem:fix-connected}, $p$ must be directly below a neighbor of $\module m$. 
Therefore, it can be found in $z_\module m - z_p$ time, by querying positions directly below $\module m$. The number of moves that $\Fix$ then performs is precisely $z_\module m - z_p + 3$. Hence, amortizing the $z_\module m - z_p$ query time by the number of moves, we get an $O(1)$ amortized runtime. 
If $\module m$ is quasi-compact, then $p$ does not exist. 
Then, the exit condition of the while loop in line \ref{ln:comp-while} can be checked in $O(1)$ time.
\end{proof}}

\section{Reconfiguring Between Compact Configurations}
\label{sec:comp2comp}
\iffull
The final step for universal reconfiguration is a procedure that reconfigures between compact configurations.
\fi

\begin{lemma}
    \label{lem:comp2comp}
    Given two compact configurations $C_1$ and $C_2$ with $n$ modules where all modules lie in the positive $xyz$ orthant, we can reconfigure one into the other by using at most $\sum_{\module m\in C_1} \|\module m\|_1 + \sum_{\module m\in C_2} \|\module m\|_1 $ sliding moves. 
\end{lemma}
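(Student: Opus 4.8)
The plan is to use the fact that a \emph{compact} configuration is precisely an order ideal (down-set) of the poset $(\mathbb{Z}_{\ge 0}^3,\le)$: $C$ is compact iff every cell dominated by a cell of $C$ is again in $C$. We reconfigure $C_1$ into $C_2$ one module at a time, keeping the invariant that the current configuration $C$ is a down-set, and at each step moving a single module from a ``removable corner'' of $C$ to a ``growable corner'' dictated by $C_2$.

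Concretely, while $C \neq C_2$ pick $p$ maximal (in the domination order) in $C\setminus C_2$ and $q$ minimal in $C_2\setminus C$. One checks the following. (i) Every cell strictly dominating $p$ is absent from $C$: if such a cell were in $C$ it would be absent from $C_2$ (else $p\in C_2$), contradicting maximality of $p$; hence $C\setminus\{p\}$ is a down-set and the module at $p$ is nonarticulate. (ii) Every cell strictly dominated by $q$ lies in $C$: otherwise it would lie in $C_2\setminus C$ below $q$, contradicting minimality of $q$; hence $C\cup\{q\}$ is a down-set. (iii) $p$ is not dominated by $q$, since $p\notin C_2\ni q$. Consequently $C':=(C\setminus\{p\})\cup\{q\}$ is again a down-set of $n$ modules with $|C'\cap C_2|=|C\cap C_2|+1$, so the process halts after at most $n$ steps with $C=C_2$. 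Moreover, throughout the run one has $C\setminus C_2\subseteq C_1$ and $C_2\setminus C\subseteq \overline{C_1}$ (both maintained inductively, since every added cell $q\in C_2$ is never removed and every removed cell $p\notin C_2$ is never re-added); hence the cells $p$ chosen over the whole run are \emph{distinct} elements of $C_1\setminus C_2$ and the cells $q$ are \emph{distinct} elements of $C_2\setminus C_1$. Therefore, if step $j$ can be realized with at most $\|p_j\|_1+\|q_j\|_1$ moves, the total is at most $\sum_{\module m\in C_1\setminus C_2}\|\module m\|_1+\sum_{\module m\in C_2\setminus C_1}\|\module m\|_1\le \sum_{\module m\in C_1}\|\module m\|_1+\sum_{\module m\in C_2}\|\module m\|_1$.

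It remains to transport the module from $p$ to $q$ within $\|p\|_1+\|q\|_1$ sliding moves while preserving connectivity. We move \emph{only} that module, sliding it along the exterior surface of the fixed down-set $D:=C\setminus\{p\}$. Since $D$ is a down-set it is connected, and the moving module is always in a cell adjacent to $D$, so every intermediate configuration is $D$ plus one pendant cell; connectivity and the single-backbone condition hold automatically. The route is ``valley-shaped'': starting from $p$ we walk the module over the staircase surface toward the origin, weakly decreasing $\|\cdot\|_1$ at each move (using convex transitions to cross concave notches without increasing the norm) until reaching a surface cell $c$ of minimum reachable norm near the origin; then symmetrically we ascend from (a surface cell of small norm near) the origin to $q$, weakly increasing $\|\cdot\|_1$. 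Such a path has length $(\|p\|_1-\|c\|_1)+(\|q\|_1-\|c\|_1)\le\|p\|_1+\|q\|_1$, and at each step the free-space requirements of the slide/convex transition are met using only cells of $D$ (which are present because $D$ is a down-set).

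The main obstacle is exactly this last routing claim: proving that the exterior staircase surface of a down-set always admits, between any two of its surface cells, a path along which $\|\cdot\|_1$ is unimodal and whose consecutive cells differ by a legal slide or convex transition — in particular checking the free-space conditions and handling ``thin'' pillars and notches so that the descent never has to increase the norm. All the remaining bookkeeping (the down-set invariants, the distinctness of the $p$'s and $q$'s, and the final summation) is routine. For planar configurations the argument is the same, except the exterior surface is the single northeast staircase boundary curve, which is monotone, so the routing bound is immediate there.
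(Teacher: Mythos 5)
Your proposal is correct and follows essentially the same route as the paper: swap one module per step from a maximal cell of $C\setminus C_2$ to a minimal cell of $C_2\setminus C$ (the paper phrases this as a cell of $C_1\setminus C_2$ dominated by no other cell of $C_1\cup C_2$, and a cell of $C_2\setminus C_1$ all of whose dominated cells are already occupied), keeping the configuration compact throughout and summing $\|p\|_1+\|q\|_1$ over distinct cells of the symmetric difference. The routing step you flag as the main obstacle is exactly the one the paper also settles in a single line, by observing that the boundary of a compact configuration away from the planes $x=0$, $y=0$, $z=0$ is an $x$-, $y$-, and $z$-monotone staircase surface, so the shortest surface path from $p$ to $q$ is coordinate-monotone and has length at most $\|p\|_1+\|q\|_1$; your "valley-shaped unimodal" path is the same argument at the same level of detail.
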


\begin{proof}
    We use induction on $|C_1\setminus C_2|$.
    The base case is when $|C_1\setminus C_2|=0$ and thus $C_1=C_2$ and we are done.
    Otherwise, we claim that there is a module $\module m_1 \in C_1\setminus C_2$ whose cell is not dominated by any other module in $ C_1\cup C_2$. 
    We prove this by contradiction: assume that every module if $C_1\setminus C_2$ is dominated by some module in $ C_1\cup C_2$. Because domination establishes a partial order, there must exist a module $\module m \in C_1\setminus C_2$ that us dominated by a module in $\module m'\in C_1\cap C_2$. That is, we have found a module $\module m$ is only present in $C_1$ and is dominated by $\module m' \in C_1\cap C_2$ that is present in both $C_1$ and $C_2$. However, this is a contradiction to the fact that $C_2$ is compact.
    The interesting property of $\module m_1$ is that  $C_1\setminus\{\module m_1\}$ is compact.

    Using a symmetric argument, we show that there exists a cell $\module m_2\in C_2\setminus C_1$ so that every dominated cell is occupied in $C_1\cap C_2$.
    For contradiction, assume no such module exists. Let $\module m$ be the module in $C_2\setminus C_1$ with smallest $\|\module m\|_1$.
    Since $C_2$ is compact, all cells that $\module m$ dominates are occupied in $C_2$.
    Let $m'\in C_2\setminus C_1$ be a module dominated by $\module m$, which must exist by assumption. 
    Then $\|\module m'\|_1 < \|\module m\|_1$, contradicting the choice of $\module m$.
    We conclude that $C_1\cup \{\module m_2\}$ is compact. Furthermore, the two claims combined imply that $C_1'=C_1\cup \{\module m_2\}\setminus \{\module m_1\}$ is compact.
    Then $|C_1\setminus C_2| = |C_1'\setminus C_2|+2$ as desired.

    To complete the proof we must show that the reconfiguration can be done in the claimed number of moves. Note that $\module m_1$ and $\module m_2$ must both be in the outer boundary of $C_1$ and $C_1'$ respectively. 
    By definition of compact, the boundary of $C_1\setminus\{\module m_1\}$ that is not on the $x=0, y=0$ or $z=0$ planes is an $x$-, $y$-, and $z$-monotone surface.
    Thus, the shortest path from $\module m_1$ to $\module m_2$ on this surface is also monotone and, hence its length is upper bounded by $\|\module m_1\|_1+\|\module m_2\|_1$.
\end{proof}

\section{Conclusion}\label{sec:conclusion}
Algorithmic bounds have been derived from the number of modules~\cite{DP,abel2008universal}, to dimensions of bounding boxes~\cite{a.akitaya_et_al:LIPIcs.SWAT.2022.4,aakks-quadratic04} and eventually the sum of coordinates (this paper and ~\cite{inPlaceCompact24}).
We note that, none are optimal if the starting and target configurations are far from compact but very similar. Is there a better parameter to bound the length of reconfiguration between configurations? Such a bound is also important for approximation algorithms, of which there are none. Note that in general, shortest reconfiguration is known to be NP-complete~\cite{a.akitaya_et_al:LIPIcs.SWAT.2022.4}. 

Another avenue of research would be parallelization where we look into the {\em makespan} (i.e., time required to execute all moves when we allow robots to move independently). The constants in our algorithm would directly improve roughly by a factor of 6 if musketeers are allowed to move simultaneously. Makespan would be further improved if freed stragglers acted as another set of musketeers working in parallel.

\iffull
Recent research on practical prototypes of modular robots (such as the {ARMADAS} model used by NASA~\cite{realisticSwat24}) focus on the \emph{accessibility} of the moves in a reconfiguration.
Here, every move must occur on the outer boundary of the configuration. In other models, module moves can sometimes be executed only by a few active agents~\cite{agents25} (all other modules are passive and can only remain in place).
In our algorithm, every move is accessible from the outer face and happens close to our active agents (i.e., musketeers). The only exception is during the execution of line \ref{ln:moveDown} of $\Fix$. We believe that our algorithm can be adapted to fit either model while remaining in-place and having similar bounds.
\fi


\renewcommand{\em}{\itshape}
\renewcommand{\emph}[1]{\textit{#1}}


\small
\bibliographystyle{abbrv}
\bibliography{cubes}

\begin{thebibliography}{10}

\bibitem{aakks-quadratic04}
Z.~Abel, H.~A. Akitaya, S.~D. Kominers, M.~Korman, and F.~Stock.
\newblock A universal in-place reconfiguration algorithm for sliding
  cube-shaped robots in a quadratic number of moves.
\newblock In {\em SoCG}, pages 1:1--1:14, 2024.

\bibitem{abel2008universal}
Z.~Abel and S.~D. Kominers.
\newblock Universal reconfiguration of (hyper-)cubic robots.
\newblock {\em arXiv preprint}, 2008.

\bibitem{akitaya2021universal}
H.~A. Akitaya, E.~M. Arkin, M.~Damian, E.~D. Demaine, V.~Dujmovi{\'c},
  R.~Flatland, M.~Korman, B.~Palop, I.~Parada, A.~v.~R. Renssen, and
  V.~Sacristán.
\newblock Universal reconfiguration of facet-connected modular robots by
  pivots: The {$O(1)$ Musketeers}.
\newblock {\em Algorithmica}, 83:1316--1351, 2021.

\bibitem{a.akitaya_et_al:LIPIcs.SoCG.2021.10}
H.~A. Akitaya, E.~D. Demaine, A.~Gonczi, D.~H. Hendrickson, A.~Hesterberg,
  M.~Korman, O.~Korten, J.~Lynch, I.~Parada, and V.~Sacrist\'{a}n.
\newblock Characterizing universal reconfigurability of modular pivoting
  robots.
\newblock In {\em SoCG}, volume 189, pages 10:1--10:20, 2021.

\bibitem{a.akitaya_et_al:LIPIcs.SWAT.2022.4}
H.~A. Akitaya, E.~D. Demaine, M.~Korman, I.~Kostitsyna, I.~Parada, W.~Sonke,
  B.~Speckmann, R.~Uehara, and J.~Wulms.
\newblock Compacting squares: Input-sensitive in-place reconfiguration of
  sliding squares.
\newblock In {\em SWAT}, pages 4:1--4:19, 2022.

\bibitem{aloupis2009linear}
G.~Aloupis, S.~Collette, M.~Damian, E.~D. Demaine, R.~Flatland, S.~Langerman,
  J.~O'Rourke, S.~Ramaswami, V.~Sacrist{\'a}n, and S.~Wuhrer.
\newblock Linear reconfiguration of cube-style modular robots.
\newblock {\em Computational Geometry}, 42(6-7):652--663, 2009.

\bibitem{aloupis2008reconfiguration}
G.~Aloupis, S.~Collette, E.~D. Demaine, S.~Langerman, V.~Sacrist{\'a}n, and
  S.~Wuhrer.
\newblock Reconfiguration of cube-style modular robots using $o (\log n)$
  parallel moves.
\newblock In {\em ISAAC}, pages 342--353. Springer, 2008.

\bibitem{agents25}
A.~T. Becker, S.~P. Fekete, J.~Friemel, R.~Kosfeld, P.~Kramer, H.~Kube,
  C.~Rieck, C.~Scheffer, and A.~Schmidt.
\newblock Moving matter: Efficient reconfiguration of tile arrangements by a
  single active robot.
\newblock In {\em to appear in CCCG}, 2025.

\bibitem{DP}
A.~Dumitrescu and J.~Pach.
\newblock Pushing squares around.
\newblock {\em Graphs and Combinatorics}, 22(1):37--50, 2006.

\bibitem{dumitrescu2004motion}
A.~Dumitrescu, I.~Suzuki, and M.~Yamashita.
\newblock Motion planning for metamorphic systems: Feasibility, decidability,
  and distributed reconfiguration.
\newblock {\em IEEE Transactions on Robotics and Automation}, 20(3):409--418,
  2004.

\bibitem{fitch2003reconfiguration}
R.~Fitch, Z.~Butler, and D.~Rus.
\newblock Reconfiguration planning for heterogeneous self-reconfiguring robots.
\newblock In {\em Proceedings of the 2003 IEEE/RSJ International Conference on
  Intelligent Robots and Systems (IROS 2003)}, volume~3, pages 2460--2467.
  IEEE, 2003.

\bibitem{inPlaceCompact24}
I.~Kostitsyna, T.~Ophelders, I.~Parada, T.~Peters, W.~Sonke, and B.~Speckmann.
\newblock Optimal in-place compaction of sliding cubes.
\newblock In {\em {SWAT}}, pages 31:1--31:14, 2024.

\bibitem{sung2015reconfiguration}
C.~Sung, J.~Bern, J.~Romanishin, and D.~Rus.
\newblock Reconfiguration planning for pivoting cube modular robots.
\newblock In {\em Proceedings of the 2015 IEEE International Conference on
  Robotics and Automation (ICRA)}, pages 1933--1940. IEEE, 2015.

\bibitem{realisticSwat24}
M.~S.~R. Team, J.~Brunner, K.~C. Cheung, E.~D. Demaine, J.~Diomidova, C.~Gregg,
  D.~H. Hendrickson, and I.~Kostitsyna.
\newblock Reconfiguration algorithms for cubic modular robots with realistic
  movement constraints.
\newblock In {\em {SWAT}}, pages 34:1--34:18, 2024.

\end{thebibliography}

\ifabstract
\newpage
\appendix
\section{Omitted proofs and details}\label{sec:app}
\fi
\magicappendix
\end{document}